\newcommand{\ket}[1]{\big| #1 \big\rangle}
\newcommand{\bra}[1]{\big\langle #1 \big|}
\newcommand{\braket}[2]{\big\langle #1 \big| #2 \big\rangle}                 
\newtheorem{theorem}{Theorem}[section]
\newtheorem{proposition}[theorem]{Proposition}
\newtheorem{corollary}[theorem]{Corollary}
\newtheorem{definition}[theorem]{Definition}
\newenvironment{proof}[1][Proof]{\begin{trivlist}
\item[\hskip \labelsep {\bfseries #1}]}{\end{trivlist}}
\newcommand{\qed}{\nobreak \ifvmode \relax \else
      \ifdim\lastskip<1.5em \hskip-\lastskip
      \hskip1.5em plus0em minus0.5em \fi \nobreak
      \vrule height0.75em width0.5em depth0.25em\fi}
\begin{document}


\title{The Staggered Quantum Walk Model}
\author{R. Portugal\mbox{$^{1}$}\footnote{Corresponding author: portugal@lncc.br}, R.A.M. Santos\mbox{$^{1}$}, T.D. Fernandes\mbox{$^{1,2}$}, D.N. Gon\c{c}alves\mbox{$^{3}$} \\
\\
{\small $^1$National Laboratory of Scientific Computing - LNCC} \\
{\small Av. Get\'{u}lio Vargas 333, 25651-075, Petr\'{o}polis, RJ, Brazil}\\
{\small }\\
{\small $^2$Universidade Federal do Esp\'{i}rito Santo - UFES, 29500-000, Alegre, Brazil}\\
{\small }\\
{\small $^3$Centro de Educa\c{c}\~ao Tecnol\'{o}gica Celso Suckow da Fonseca - CEFET}\\ {\small 25620-003, Petr\'{o}polis, Brazil}
}

\maketitle

\begin{abstract}
There are at least three models of discrete-time quantum walks (QWs) on graphs currently under active development. In this work we focus on the equivalence of two of them, known as Szegedy's and staggered QWs. We give a formal definition of the staggered model and discuss generalized versions for searching marked vertices. Using this formal definition, we prove that any instance of Szegedy's model is equivalent to an instance of the staggered model. On the other hand, we show that there are instances of the staggered model that cannot be cast into Szegedy's framework. Our analysis also works when there are marked vertices. We show that Szegedy's spatial search algorithms can be converted into search algorithms in staggered QWs. We take advantage of the similarity of those models to define the quantum hitting time in the staggered model and to describe a method to calculate the eigenvalues and eigenvectors of the evolution operator of staggered QWs.
\end{abstract}

\section{Introduction}

Quantum walks (QWs) are being used with many flavors with the goal of understanding quantum systems and building quantum algorithms for quantum computers. Research in QWs started with the coined discrete-time QW~\cite{Aharonov:1993} proposed with the goal of displaying a quantum system with features strikingly different from classical systems. From a different viewpoint, Meyer~\cite{Meyer96} proposed a non-trivial example of a quantum cellular automata which can be somehow considered the starting point of staggered QWs. By quantizing continuous classical Markov chains, Farhi and Gutmann~\cite{Farhi:1998} proposed a continuous-time version of QWs, with clear implications to the area of quantum computing. Inspired by classical discrete Markov chains, Szegedy~\cite{Szegedy:2004} proposed a coinless discrete-time QW model and was able to provide us with a natural definition of quantum hitting time.   

Szegedy's QWs are obtained by quantizing classical discrete Markov chains described by some transition matrix. Szegedy~\cite{Szegedy:2004} also developed QW-based search algorithms inspired by a previous known coined-based search algorithm~\cite{Ambainis:2004}. On ergodic Markov chains, it is possible to detect the presence of a marked vertex at a hitting time that is quadratically smaller than the classical average hitting time~\cite{Magniez:2011}. Szegedy's model was also used for the searching problem~\cite{Magniez:2011,Krovi:2010}, which aims to find the location of a marked vertex, and for searching triangles~\cite{mss07}.  Some references highlighted the power of the method, for instance see Ref.~\cite{Mosca:2009}. The fact that Szegedy's model is a QW on the edges of a bipartite graph rather than on the vertices is emphasized in~\cite{Santha:2008,Magniez:2011}. In this paper we investigate this idea by employing the line graph of the bipartite graph. 

Meyer~\cite{Meyer96} circumvented the triviality of quantum cellular automata by relaxing the homogeneity condition. He proposed an one-dimensional quantum cellular automata driven by staggered unitary operators, which can be transposed as a coinless QW on the line or cycle. The connection of this model with the staggered lattice fermion formalism was analyzed in~\cite{Patel05} for the two-dimensional case and applied for spatial searching in~\cite{Patel:2010}. The staggered model was rediscovered by Falk~\cite{Falk:2013}, who suggested a simple method of obtaining the evolution operator by splitting the vertices of the graph into disjoint polygons that tessellate the two-dimensional lattice. Falk also proposed a searching method, which was used by~\cite{Ambainis:2013} to prove analytically that the staggered model finds a marked vertex in time $O(\sqrt{N\log N})$ for a two-dimensional lattice with $N$ vertices matching the performance achieved by other forms of QWs on this graph. The detailed dynamics of the one-dimensional staggered model was analyzed in~\cite{PBF15} and the moments of the probability distribution function in~\cite{Santos:2015}. The connection between the coined and the staggered one-dimensional QWs was analyzed in~\cite{HKS05,PBF15,Strauch:2006}. Further analysis on this connection is an important research topic.

In this paper we formally define the staggered QW model on graphs by using tessellations so that each tessellation covers all vertices with non-overlapping polygons and the tessellation union covers all edges. The maximal cliques play an essential role in building the tessellation because adjacent vertices are reachable for the walker's hopping. If two or more maximal cliques share a common vertex, each clique must be inside a different tessellation. The evolution operator is a product of orthogonal reflections that have a one-to-one correspondence with the tessellations producing a diffusion through the maximal cliques in a staggered way. After the action of the evolution operator with a localized initial condition, the wave function spreads to all vertices reachable by the graph structure, which is the clique with the starting location and the adjacent maximal cliques for the case with two reflections. We also discuss a generalized form of staggered QW using partial tessellations which can be used for spatial search algorithms in the sense that the walker will search for the vertices inside the missing polygons. The complete dynamics is obtained by applying the evolution operator recursively starting with some initial state and by performing a measurement at the end. 

After formally defining the staggered model, we show that Szegedy's QWs on a bipartite graph $\Gamma$ are staggered QWs on the line graph of $\Gamma$ with a restricted form of tessellation. The restriction is: The polygons must share only one vertex. The converse is also true. Any staggered QW with tessellations that share one vertex in the polygon intersections can be cast into an extended version of Szegedy's model with complex amplitudes. This result sheds light on the structure of Szegedy's model and explains the meaning of the fact that a Szegedy's QW is a hopping on the edges of the bipartite graph rather than on the vertices. We also show that there are generalized staggered QWs that reproduce the searching model proposed by Szegedy. In this case we need to use partial tessellations and the walker looks for vertices inside missing polygons, which are the ones returned by the measurement. On the other hand, if we allow two or more vertices in at least one polygon intersection, we define instances of staggered QWs that cannot be cast into Szegedy's framework. We give special attention to those instances.

Falk~\cite{Falk:2013} proposed an evolution operator for a searching model on the two-dimensional lattice by interlacing a reflection around the marked vertices with the orthogonal reflections generated by the tessellations. We use this searching model to define the hitting time on staggered QWs on finite graphs by generalizing Szegedy's definition and we describe how to obtain the eigenvalues and eigenvectors of the evolution operator using the discriminant matrix generated by the inner product of the polygons of the tessellations. We show that the evolution operator with marked vertices can be written as a product of two orthogonal reflections, and if an entire polygon is marked, the walker is not able to find it.

The structure of the paper is as follows. In Sec.~\ref{sec:Staggered} we define the staggered QW model and discuss generalizations that are useful for searching algorithms. In Sec.~\ref{sec:Szegedy} we review Szegedy's QW model with the goal of finding the connection with the staggered QW model. In Sec.~\ref{sec:Szegedisstag}
we show that Szegedy's model is a restricted form of staggered QWs and we characterize when staggered walks cannot be cast into Szegedy's framework.  In Sec.~\ref{sec:HT} we 
define the hitting time on the staggered model using reflection operators around the marked vertices. In Sec.~\ref{sec:SearchusingStag} we analyze the eigenvalues and eigenvectors of the evolution operator for spatial search algorithms based on reflections around the marked vertices. In Sec.~\ref{sec:Conc} we draw our conclusions.

\

\section{Staggered Quantum Walks}\label{sec:Staggered}

Inspired by Falk's paper~\cite{Falk:2013}, we define the staggered quantum walk on a graph $\Gamma$ with $N$ vertices\footnote{$N$ can be infinite.} by using two (or more) independent graph tessellations. Each tessellation uses non-overlapping (non-planar) polygons of adjacent vertices which need not to have the same shape and a polygon may contain only one vertex. The set of polygons of each tessellation need to cover all vertices of the graph and the tessellation union must cover all edges. Two polygons in different tessellations necessarily overlap and some (or all) intersections may contain more than one vertex. Each polygon defines a unit vector in the Hilbert space ${\cal H}^{N}$ by superposing the vertices in the polygon with nonzero amplitudes. Examples of tessellations are depicted in Figs.~\ref{fig:tess} and~\ref{fig:grid}.


The staggered quantum walk with two tessellations called $\alpha$ and $\beta$ is defined by the evolution operator
\begin{equation}\label{U}
    U \,=\, U_1 U_0,
\end{equation}
where
\begin{eqnarray}
  U_0 &=& 2\sum_{k=0}^{m-1} \ket{\alpha_k}\bra{\alpha_k} - I, \label{U_0}\\
  U_1 &=& 2\sum_{k=0}^{n-1} \ket{\beta_k}\bra{\beta_k} - I, \label{U_1}
\end{eqnarray}
and $m$ and $n$ are the number of polygons in each tessellation, and
\begin{eqnarray}
  \ket{\alpha_k} &=&  \sum_{k'\in \alpha_k} a_{k,k'} \ket{k'}, \label{alpha_k} \\
  \ket{\beta_k} &=&  \sum_{k'\in \beta_k} b_{k,k'} \ket{k'}, \label{beta_k}
\end{eqnarray}
where $a_{k,k'}$ and $b_{k,k'}$ are nonzero complex amplitudes of the unit vectors $\ket{\alpha_k}$ and $\ket{\beta_k}$ in ${\cal H}^{N}$. Operators $U_0$ and $U_1$ are unitary and Hermitian $\big(U_{0,1}^2=I\big)$ because each set of polygons is non-overlapping $\big(\braket{\alpha_k}{\alpha_{k'}}=\braket{\beta_k}{\beta_{k'}}=\delta_{k k'}\big)$.
It is straightforward to show that
\begin{equation}
U\ket{k} = 4 \sum_{k',k''} D^*_{k'k''} a^*_{k'k}\ket{\beta_{k''}} - 2\sum_{k''}b^*_{k''k}\ket{\beta_{k''}}- 2\sum_{k'}a^*_{k'k}\ket{\alpha_{k'}} + \ket{k}, \label{Uk}
\end{equation}
where $D_{k'k''}=\braket{\alpha_{k'}}{\beta_{k''}}$, $k'$ runs from 0 to $m-1$, $k''$ from 0 to $n-1$, and $k$ from 0 to $N-1$.

\begin{figure}[!h]
\centering
\subfigure[fig2][Cycle with $2$-site polygons]{\label{fig2}
\begin{tikzpicture}[scale=0.45]
\draw (0,0) circle (3cm);
\draw [blue,ultra thick] (44:3.7cm) arc (44:100:3.7cm);
\draw [blue,ultra thick] (44:3.7cm) -- (44:2.3cm);
\draw [blue,ultra thick] (100:3.7cm) -- (100:2.3cm);
\draw [blue,ultra thick] (44:2.3cm) arc (44:100:2.3cm);

\draw [blue,ultra thick] (-28:3.7cm) arc (-28:28:3.7cm);
\draw [blue,ultra thick] (-28:3.7cm) -- (-28:2.3cm);
\draw [blue,ultra thick] (28:3.7cm) -- (28:2.3cm);
\draw [blue,ultra thick] (-28:2.3cm) arc (-28:28:2.3cm);

\draw [blue,ultra thick] (-100:3.7cm) arc (-100:-44:3.7cm);
\draw [blue,ultra thick] (-44:3.7cm) -- (-44:2.3cm);
\draw [blue,ultra thick] (-100:3.7cm) -- (-100:2.3cm);
\draw [blue,ultra thick] (-100:2.3cm) arc (-100:-44:2.3cm);

\draw [blue,ultra thick] (-172:3.7cm) arc (-172:-116:3.7cm);
\draw [blue,ultra thick] (-172:3.7cm) -- (-172:2.3cm);
\draw [blue,ultra thick] (-116:3.7cm) -- (-116:2.3cm);
\draw [blue,ultra thick] (-172:2.3cm) arc (-172:-116:2.3cm);

\draw [blue,ultra thick] (116:3.7cm) arc (116:172:3.7cm);
\draw [blue,ultra thick] (116:3.7cm) -- (116:2.3cm);
\draw [blue,ultra thick] (172:3.7cm) -- (172:2.3cm);
\draw [blue,ultra thick] (116:2.3cm) arc (116:172:2.3cm);

\draw [red,dashed,ultra thick] (64:4cm) arc (64:8:4cm);
\draw [red,dashed,ultra thick] (64:4cm) -- (64:2cm);
\draw [red,dashed,ultra thick] (8:4cm) -- (8:2cm);
\draw [red,dashed,ultra thick] (64:2cm) arc (64:8:2cm);

\draw [red,dashed,ultra thick] (-8:4cm) arc (-8:-64:4cm);
\draw [red,dashed,ultra thick] (-8:4cm) -- (-8:2cm);
\draw [red,dashed,ultra thick] (-64:4cm) -- (-64:2cm);
\draw [red,dashed,ultra thick] (-8:2cm) arc (-8:-64:2cm);

\draw [red,dashed,ultra thick] (-80:4cm) arc (-80:-136:4cm);
\draw [red,dashed,ultra thick] (-80:4cm) -- (-80:2cm);
\draw [red,dashed,ultra thick] (-136:4cm) -- (-136:2cm);
\draw [red,dashed,ultra thick] (-80:2cm) arc (-80:-136:2cm);

\draw [red,dashed,ultra thick] (208:4cm) arc (208:152:4cm);
\draw [red,dashed,ultra thick] (-152:4cm) -- (-152:2cm);
\draw [red,dashed,ultra thick] (152:4cm) -- (152:2cm);
\draw [red,dashed,ultra thick] (208:2cm) arc (208:152:2cm);

\draw [red,dashed,ultra thick] (136:4cm) arc (136:80:4cm);
\draw [red,dashed,ultra thick] (136:4cm) -- (136:2cm);
\draw [red,dashed,ultra thick] (80:4cm) -- (80:2cm);
\draw [red,dashed,ultra thick] (136:2cm) arc (136:80:2cm);

\draw[fill] (90:3cm) circle [radius=0.1] node[above] {\scriptsize{$0$}};
\draw[fill] (54:3cm) circle [radius=0.1] node[above] {\scriptsize{$1$}};
\draw[fill] (18:3cm) circle [radius=0.1] node[right] {\scriptsize{$2$}};
\draw[fill] (-18:3cm) circle [radius=0.1] node[right] {\scriptsize{$3$}};
\draw[fill] (-54:3cm) circle [radius=0.1] node[below] {\scriptsize{$4$}};
\draw[fill] (-90:3cm) circle [radius=0.1] node[below] {\scriptsize{$5$}};
\draw[fill] (-126:3cm) circle [radius=0.1] node[below] {\scriptsize{$6$}};
\draw[fill] (-162:3cm) circle [radius=0.1] node[left] {\scriptsize{$7$}};
\draw[fill] (162:3cm) circle [radius=0.1] node[left] {\scriptsize{$8$}};
\draw[fill] (126:3cm) circle [radius=0.1] node[above] {\scriptsize{$9$}};

\end{tikzpicture} }
\subfigure[fig1][Glued triangle tree]{\label{fig1}
\begin{tikzpicture}[scale=0.37]
\draw [red,dashed,ultra thick]  (10,0) -- (5.5,5.5) -- (5.5,-5.5)  -- (10,0);
\draw [blue,ultra thick]  (-10,0) -- (-5.5,5.5) -- (-5.5,-5.5)  -- (-10,0);
\draw [red,dashed,ultra thick]  (-7,4) -- (-2,7) -- (-2,1) --  (-7,4);
\draw [blue,ultra thick]  (7,4) -- (2,7) -- (2,1) --  (7,4);
\draw [red,dashed,ultra thick]  (-7,-4) -- (-2,-7) -- (-2,-1) --  (-7,-4);
\draw [blue,ultra thick]  (7,-4) -- (2,-7) -- (2,-1) --  (7,-4);
\draw [red,dashed,ultra thick]  (4,6) -- (-1,7.5) -- (-1,4.5) --  (4,6);
\draw [red,dashed,ultra thick]  (4,2) -- (-1,3.5) -- (-1,0.5) --  (4,2);
\draw [red,dashed,ultra thick]  (4,-6) -- (-1,-7.5) -- (-1,-4.5) --  (4,-6);
\draw [red,dashed,ultra thick] (4,-2) -- (-1,-3.5) -- (-1,-0.5) --  (4,-2);
\draw [blue,ultra thick]   (-4,6) -- (1,7.5) -- (1,4.5) --  (-4,6);
\draw [blue,ultra thick]   (-4,2) -- (1,3.5) -- (1,0.5) --  (-4,2);
\draw [blue,ultra thick]  (-4,-6) -- (1,-7.5) -- (1,-4.5) --  (-4,-6);
\draw [blue,ultra thick]  (-4,-2) -- (1,-3.5) -- (1,-0.5) --  (-4,-2);
\draw[fill] (0,7) circle [radius=0.1];
\node[below] at (0.4,7) {\scriptsize{$14$}};
\draw (0,7) -- (3,6);
\draw (0,7) -- (-3,6);
\draw[fill] (3,6) circle [radius=0.1];
\node[below] at (3.4,6) {\scriptsize{$18$}};
\draw[fill] (-3,6) circle [radius=0.1];
\node[below] at (-3.3,6) {\scriptsize{$6$}};
\draw (0,5) -- (3,6);
\draw (0,5) -- (-3,6);
\draw[fill] (0,5) circle [radius=0.1];
\node[above] at (-0.4,5) {\scriptsize{$13$}};
\draw (3,6) -- (6,4);
\draw (-3,6) -- (-6,4);
\draw (3,2) -- (6,4);
\draw (-3,2) -- (-6,4);
\draw[fill] (6,4) circle [radius=0.1];
\node[below] at (6.3,3.5) {\scriptsize{$20$}};
\draw[fill] (-6,4) circle [radius=0.1];
\node[below] at (-6.3,3.5) {\scriptsize{$2$}};
\draw (-6,4) -- (-6,-4);
\draw[fill] (0,3) circle [radius=0.1];
\node[below] at (0.4,3) {\scriptsize{$12$}};
\draw (0,3) -- (3,2);
\draw (0,3) -- (-3,2);
\draw[fill] (-3,2) circle [radius=0.1];
\node[above] at (-3.3,2) {\scriptsize{$5$}};
\draw[fill] (3,2) circle [radius=0.1];
\node[above] at (3.4,2) {\scriptsize{$17$}};
\draw (0,1) -- (3,2);
\draw (0,1) -- (-3,2);
\draw[fill] (0,1) circle [radius=0.1];
\node[above] at (-0.4,1) {\scriptsize{$11$}} ;
\draw[fill] (9,0) circle [radius=0.1] node[below] {\scriptsize{$21$}};
\draw[blue,ultra thick] (9,0) circle (1cm); 
\draw (6,4) -- (9,0);
\draw (6,-4) -- (9,0);
\draw[fill] (-9,0) circle [radius=0.1] node[below] {\scriptsize{$0$}};
\draw[red,dashed,ultra thick] (-9,0) circle (1cm); 
\draw (-6,4) -- (-9,0);
\draw (-6,-4) -- (-9,0);
\draw[fill] (0,-1) circle [radius=0.1];
\node[below] at (0.4,-1) {\scriptsize{$10$}};
\draw (0,-1) -- (0,-3);
\draw (0,1) -- (0,3);
\draw (0,-5) -- (0,-7);
\draw (0,5) -- (0,7);
\draw (0,-1) -- (3,-2);
\draw (0,-1) -- (-3,-2);
\draw (3,-2) -- (3,-6);
\draw (3,2) -- (3,6);
\draw[fill] (3,-2) circle [radius=0.1];
\node[below] at (3.4,-2) {\scriptsize{$16$}};
\draw[fill] (-3,-2) circle [radius=0.1];
\node[below] at (-3.3,-2){\scriptsize{$4$}} ;
\draw (-3,-2) -- (-3,-6);
\draw (-3,2) -- (-3,6);
\draw (0,-3) -- (3,-2);
\draw (0,-3) -- (-3,-2);
\draw[fill] (0,-3) circle [radius=0.1];
\node[above] at (-0.3,-3) {\scriptsize{$9$}};
\draw[fill] (6,-4) circle [radius=0.1];
\draw (6,-4) -- (6,4);
\node[above] at (6.3,-3.5) {\scriptsize{$19$}};
\draw[fill] (-6,-4) circle [radius=0.1];
\node[above] at (-6.3,-3.5) {\scriptsize{$1$}};
\draw[fill] (0,-5) circle [radius=0.1];
\node[below] at (0.3,-5) {\scriptsize{$8$}};
\draw (3,-6) -- (6,-4);
\draw (-3,-6) -- (-6,-4);
\draw (3,-2) -- (6,-4);
\draw (-3,-2) -- (-6,-4);
\draw (0,-5) -- (3,-6);
\draw (0,-5) -- (-3,-6);
\draw[fill] (3,-6) circle [radius=0.1];
\node[above] at (3.4,-6) {\scriptsize{$15$}};
\draw[fill] (-3,-6) circle [radius=0.1];
\node[above] at (-3.3,-6) {\scriptsize{$3$}};
\draw (0,-7) -- (3,-6);
\draw (0,-7) -- (-3,-6);
\draw[fill] (0,-7) circle [radius=0.1];
\node[above] at (-0.3,-7) {\scriptsize{$7$}};
\end{tikzpicture}  }

\caption{Examples of tessellations for the  10-cycle and a glued triangle tree with 22 vertices: $(a)$~Depicts tessellations with polygons of size $2$ for the cycle and $(b)$~depicts tessellations with polygons of size $3$ (except for two vertices) for the glued triangle tree.  $U_0$ is associated to the blue tessellation (solid line) and $U_1$ is associated to the red tessellation (dashed line). }
\label{fig:tess}
\end{figure}
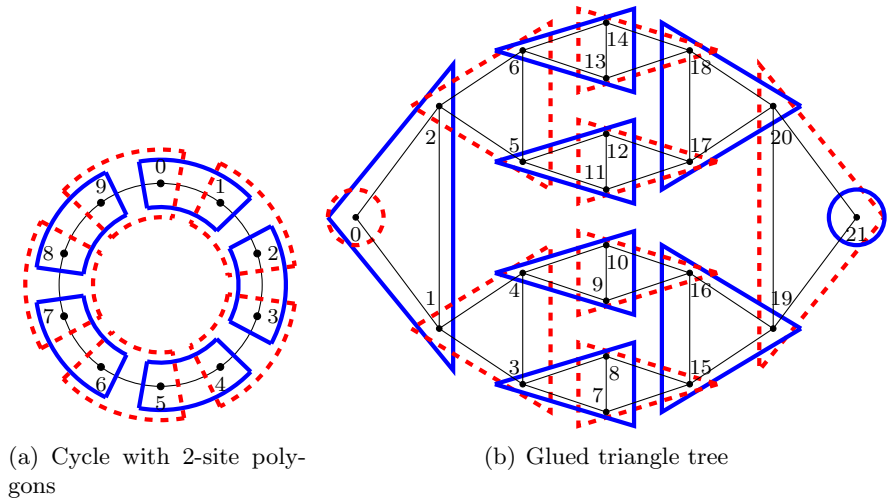

In order to give an alternative form of the definition of staggered QWs, let us define the notion of \textit{orthogonal reflection} of a graph.
\begin{definition}
A unitary and Hermitian operator $U$ is called an orthogonal reflection of a graph if the eigenvectors of $U$ associated with eigenvalue 1 have non-overlapping nonzero entries and the sum of those eigenvectors has no zero entries in the orthonormal basis associated to the vertices of the graph.
\end{definition}
This definition is basis independent because a basis change keeps the orthogonal reflection invariant and changes the orthonormal basis associated to the graph. A staggered QW on a graph $\Gamma$ can be alternatively defined by a unitary evolution operator that is a product of orthogonal reflections of $\Gamma$ that covers the edges of the graph. Given an orthogonal reflection in the basis of ${\cal H}^N$ associated with the vertices of $\Gamma$, it is possible to build a disconnected $N$-graph as a disjoint union of cliques, where each clique is associated to an eigenvector with eigenvalue 1 and the clique is formed with the vertices that have nonzero amplitudes. Each clique defines a polygon and the set of polygons is the tessellation associated with this orthogonal reflection. The product of orthogonal reflections defines the final graph after the union (and merging) of the edges of all cliques. This final graph must be equal to $\Gamma$. The union of the tessellations does not have disposable edges in the sense that every edge is inside a polygon of some tessellation. 

It is tempting to remove the requirement that the polygons must be cliques. The removal of this requirement is troublesome because after one application of the unitary operator $U_0$ (or $U_1$) with non-clique polygons, the walker can go to non-adjacent vertices. In quantum walk models, we expect that the walker only hops to adjacent vertices, and successive applications of shift operators move the walker to distant locations. Our definition does not permit the tessellations used by Falk in Ref.~\cite{Falk:2013}, which were also used by Ambainis \textit{et al}. in Ref.~\cite{Ambainis:2013} and  Patel  \textit{et al}. in Refs.~\cite{Patel05,Patel:2010}. Those papers go against the requirement described in Ref.~\cite{Aharonov:2000}, which states that a general form of quantum walks must respect the structure of the graph.   The tessellations used in those references in fact refer to the graph depicted in Fig.~\ref{fig:grid}. For the two-dimensional lattice (with degree 4), it is necessary to use at least four tessellations because every vertex belongs to four maximum cliques of size two.

\begin{figure}[!h]
\centering
\begin{tikzpicture}[scale=0.45]
\draw (-1,0) -- (12,0);
\draw (-1,3) -- (12,3);
\draw (-1,6) -- (12,6);
\draw (-1,9) -- (12,9);
\draw (1,-2) -- (1,11);
\draw (4,-2) -- (4,11);
\draw (7,-2) -- (7,11);
\draw (10,-2) -- (10,11);
\draw [dashed,red,ultra thick] (-1,1) -- (2,1) -- (2,-2);
\draw [dashed,red,ultra thick] (-1,2) -- (2,2) -- (2,7) -- (-1,7);
\draw [dashed,red,ultra thick] (3,11) -- (3,8) -- (8,8) -- (8,11);
\draw [dashed,red,ultra thick] (-1,8) -- (2,8) -- (2,11);
\draw [dashed,red,ultra thick] (9,11) -- (9,8) -- (12,8);
\draw [dashed,red,ultra thick] (12,7) -- (9,7) -- (9,2) -- (12,2);
\draw [dashed,red,ultra thick] (3,-2) -- (3,1) -- (8,1) -- (8,-2);
\draw [dashed,red,ultra thick] (12,1) -- (9,1) -- (9,-2);
\draw [dashed,red,ultra thick] (3,2) rectangle (8,7);
\draw [blue,ultra thick] (0,-1) rectangle (5,4);
\draw [blue,ultra thick] (6,-1) rectangle (11,4);
\draw [blue,ultra thick] (6,5) rectangle (11,10);
\draw [blue,ultra thick] (0,5) rectangle (5,10);
\draw[fill] (1,0) circle [radius=0.1] node[below] {\scriptsize{$00$}};
\draw[fill] (4,0) circle [radius=0.1] node[below] {\scriptsize{$10$}};
\draw[fill] (7,0) circle [radius=0.1] node[below] {\scriptsize{$20$}};
\draw[fill] (10,0) circle [radius=0.1] node[below] {\scriptsize{$30$}};
\draw[fill] (1,3) circle [radius=0.1] node[below] {\scriptsize{$01$}};
\draw[fill] (4,3) circle [radius=0.1] node[below] {\scriptsize{$11$}};
\draw[fill] (7,3) circle [radius=0.1] node[below] {\scriptsize{$21$}};
\draw[fill] (10,3) circle [radius=0.1] node[below] {\scriptsize{$31$}};
\draw[fill] (1,6) circle [radius=0.1] node[below] {\scriptsize{$02$}};
\draw[fill] (4,6) circle [radius=0.1] node[below] {\scriptsize{$12$}};
\draw[fill] (7,6) circle [radius=0.1] node[below] {\scriptsize{$22$}};
\draw[fill] (10,6) circle [radius=0.1] node[below] {\scriptsize{$32$}};
\draw[fill] (1,9) circle [radius=0.1] node[below] {\scriptsize{$03$}};
\draw[fill] (4,9) circle [radius=0.1] node[below] {\scriptsize{$13$}};
\draw[fill] (7,9) circle [radius=0.1] node[below] {\scriptsize{$23$}};
\draw[fill] (10,9) circle [radius=0.1] node[below] {\scriptsize{$33$}};

\draw (5,11) -- (12,4);
\draw (-1,11) -- (12,-2);
\draw (-1,5) -- (6,-2);

\draw (6,11) -- (-1,4);
\draw (12,11) -- (-1,-2);
\draw (12,5) -- (5,-2);
\end{tikzpicture}
\caption{Example of tessellations with polygons of size $4$ on a non-planar regular graph of degree~6. $U_0$ is associated to the blue tessellation (solid line) and $U_1$ is associated to the red tessellation (dashed line). }
\label{fig:grid}
\end{figure}
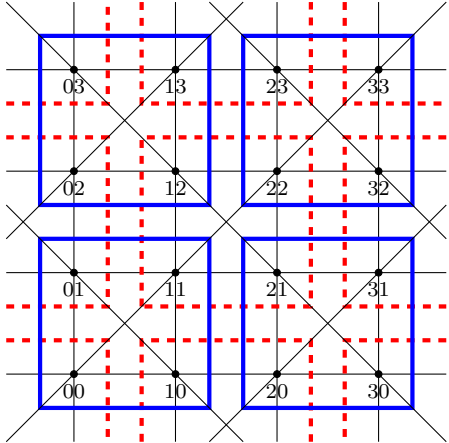

It is also tempting to remove the requirement that all edges must be inside of a polygon. The removal of this requirement is troublesome because it generates a logical indeterminacy, similar to one in the discussion about the use of non-clique polygons in the previous paragraph. If an edge is not inside a polygon, it means that it can be removed generating no change in the dynamics. In this case non-isomorphic graphs would have exactly the same dynamics and would lead to confusion and logical problems.

\subsection{Generalized Staggered Quantum Walks}\label{sec:ExtStaggered}

Usually quantum walk models are defined with no marked vertices. In this case the dynamics has no special vertex and the amplitude associated with some specific vertex will not increase above average unless the initial condition is a special one. For instance, the coined model defined in Ref.~\cite{Aharonov:2000} has those characteristics. The standard definition must be changed in order to mark a vertex. In the coined case, the coins for the marked vertices are different from the coins for non-marked vertices~\cite{Shenvi:2003}.
  
We can extend the staggered quantum walk definition by removing the requirement that each tessellation must cover the entire graph. It is allowed to use partial tessellations so that the partial-tessellation union covers all vertices (not necessarily all edges). The vertices that are not inside all tessellations are the marked ones. An equivalent way of obtaining the same extension is by allowing zero amplitudes in the entries of polygons $\alpha_k$ and $\beta_k$. 
The goal of this generalization is to define search algorithms. For example, Grover's algorithm~\cite{Grover:1997a} can be seen as a generalized staggered quantum walk on the complete graph in the following way: Tessellation $\alpha$ has polygons of size one for each non-marked vertex and tessellation $\beta$ has only one polygon with all vertices. The vertices that do not belong to tessellation $\alpha$ are the marked ones. The evolution operator of this generalized staggered model is equal to the one used in Grover's algorithm.

\section{Szegedy's Quantum Walks}\label{sec:Szegedy}


In this section we review Szegedy's QW model with the goal of finding the connection with the staggered QW model. Consider a connected bipartite graph $\Gamma(X,Y,E)$, where $X,Y$ are disjoint sets of vertices and $E$ is the set of non-directed edges. Let 
\begin{equation}
		\left(\begin{array}[]{cc}
		  0 & A \\
			A^T & 0
	\end{array}\right)
\end{equation}
be the biadjacency matrix of  $\Gamma(X,Y,E)$. Using $A$, define $P$ as a probabilistic map from $X$ to $Y$. Using $A^T$, define $Q$  as a probabilistic map from $Y$ to $X$. If $P$ is an $m\times n$ matrix, $Q$ will be an $n\times m$ matrix, both are right-stochastic with the property that each row sums to 1.

To define a quantum walk on the bipartite graph, we associate it with a Hilbert space ${\cal H}^{m n} = {\cal H}^{m}\otimes {\cal H}^{n} $, where $ m = | X |$ and $n = | Y | $, the computational basis of which is $ \big \{\ket {x, y}: x \in X, y \in Y \big \} $.
Szegedy's quantum walk is defined by the evolution operator
\begin{equation}\label{ht_U_ev}
    W \,=\, R_1 \, R_0,
\end{equation}
where
\begin{eqnarray}
  R_0 &=& 2\sum_x \ket{\phi_x}\bra{\phi_x} - I, \label{ht_RA}\\
  R_1 &=& 2\sum_y\ket{\psi_y}\bra{\psi_y} - I, \label{ht_RB}
\end{eqnarray}
and
\begin{eqnarray}
  \ket{\phi_x} &=&  \sum_{y\in Y} \sqrt{p_{x y}} \, \ket{x,y}, \label{ht_phi_x} \\
  \ket{\psi_y}  &=&  \sum_{x\in X} \sqrt{q_{y x}} \, \ket{x,y}. \label{ht_psi_y}
\end{eqnarray}
It is straightforward to show that
\begin{eqnarray}
W \ket{x,y} &=& 4 \sqrt{p_{x y}}\sum_{y'\in Y} C_{y'x} \ket{\psi_{y'}} - 2\sqrt{q_{y x}}\ket{\psi_{y}}- 2 \sqrt{p_{x y}}\ket{\phi_{x}} + \ket{x,y}, \label{Wketxy}
\end{eqnarray}
where $C_{yx}=\braket{\psi_y}{\phi_x}$.  For a detailed review see~\cite{Portugal:Book}.

\section{Equivalence between Szegedy's and Staggered QWs}\label{sec:Szegedisstag}

In this section we show that any instance of Szegedy's model is equivalent to a staggered quantum walk version. The converse is not true. We show that some instances of the staggered quantum walk model are equivalent to Szegedy's model and we characterize when staggered walks cannot be cast into Szegedy's framework.

\subsection{Szegedy's Walks are Staggered Walks!}\label{subsec:Szegedisstag}

Although Szegedy's model is defined on the Hilbert space ${\cal H}^{mn}$ associated with a bipartite graph $\Gamma(X,Y,E)$, the dynamics takes place in the subspace spanned by the edges of $\Gamma$. $W$ acts trivially on vectors $\ket{x,y}$ that do not belong to the bipartite graph and the initial condition does not include those edges. In this section we show how to define a staggered QW based on generic stochastic matrices $P$ and $Q$ equivalent to Szegedy's QW.

Let $N$ be the number of edges of $\Gamma$. Define two sets of polygons in the line graph $L(\Gamma)$ by
\begin{eqnarray}
  \ket{\alpha_x} &=&  \sum_{y\in Y} \sqrt{p_{x y}} \, \ket{f(x,y)}, \label{new_alpha_k} \label{eq:alpha_x}\\
	\ket{\beta_y} &=&  \sum_{x\in X} \sqrt{q_{y x}} \, \ket{f(x,y)}, \label{new_beta_k}\label{eq:beta_y}
\end{eqnarray}
where $f$ is a bijection between the edge set $E$ of $\Gamma$ and the labels $k$ of $L(\Gamma)$, which has associated the Hilbert space ${\cal H}^{N}$ with the computational basis $\big\{\ket{k}$, $k=0,...,N-1\big\}$. Each polygon ${\alpha_x}$ is a clique and the union of polygons $\alpha_x$ tessellates $L(\Gamma)$. The same is true for polygons ${\beta_y}$.

Suppose that $f(x,y)=k$, where $\{x,y\}\in E$. Then
\begin{equation}
	U\ket{k}=\left(2\sum_{y'=0}^{n-1} \ket{\beta_{y'}}\bra{\beta_{y'}} - I\right)\left(2\sum_{x'=0}^{m-1} \ket{\alpha_{x'}}\bra{\alpha_{x'}} - I\right)\ket{f(x,y)}.
\end{equation}
Using that $\braket{\alpha_{x'}}{f(x,y)}=\sqrt{p_{xy}}\delta_{xx'}$ and $\braket{\alpha_{y'}}{f(x,y)}=\sqrt{q_{yx}}\delta_{yy'}$, we obtain
\begin{equation}
U\ket{k} = 4 \sqrt{p_{x y}}\sum_{y'} \braket{\beta_{y'}}{\alpha_x} \ket{\beta_{y'}} - 2\sqrt{q_{y x}}\ket{\beta_{y}}- 2 \sqrt{p_{x y}}\ket{\alpha_{x}} + \ket{f(x,y)}. \label{Uketk}
\end{equation}
Eqs.~(\ref{Uketk}) and~(\ref{Wketxy}) are essentially the same results if we apply bijection $f$ to the labels $(x,y)$ of the kets $\ket{x,y}$ of Eq.~(\ref{Wketxy}). This proves 

\begin{proposition}\label{prop:Sz=Falk}
Any instance of Szegedy's quantum walk model on a bipartite graph $\Gamma$ is equivalent to a staggered quantum walk version on the line graph of $\Gamma$.
\end{proposition}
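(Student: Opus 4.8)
The plan is to establish the equivalence by exhibiting an explicit correspondence between the two Hilbert spaces and showing that the respective evolution operators act identically under this correspondence. First I would verify that the vectors $\ket{\alpha_x}$ and $\ket{\beta_y}$ defined in Eqs.~(\ref{eq:alpha_x}) and~(\ref{eq:beta_y}) genuinely constitute two valid tessellations of the line graph $L(\Gamma)$ in the sense of the staggered model. The key observation is that in $L(\Gamma)$, the vertices are the edges of $\Gamma$, and two edges are adjacent precisely when they share an endpoint in $\Gamma$. The edges incident to a fixed vertex $x\in X$ form a clique in $L(\Gamma)$, so $\ket{\alpha_x}$ is supported on a clique; since every edge of $\Gamma$ has exactly one endpoint in $X$, the supports of distinct $\ket{\alpha_x}$ are disjoint and cover all vertices of $L(\Gamma)$. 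The same reasoning with $Y$ gives the $\ket{\beta_y}$ tessellation. I would also note the amplitudes $\sqrt{p_{xy}}$ are nonzero exactly on edges present in $\Gamma$, so each polygon vector is a unit vector (since $P,Q$ are right-stochastic, $\sum_y p_{xy}=1$).

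Next I would set up the bijection $f\colon E\to\{0,\dots,N-1\}$ identifying edges of $\Gamma$ with the computational basis of ${\cal H}^N$, and then transport Szegedy's basis vectors $\ket{x,y}$ (for $\{x,y\}\in E$) to $\ket{f(x,y)}$. Under this identification, the definitions~(\ref{eq:alpha_x}) and~(\ref{eq:beta_y}) of $\ket{\alpha_x}$ and $\ket{\beta_y}$ map exactly onto the definitions~(\ref{ht_phi_x}) and~(\ref{ht_psi_y}) of $\ket{\phi_x}$ and $\ket{\psi_y}$, since $\ket{\phi_x}=\sum_y \sqrt{p_{xy}}\ket{x,y}$ becomes $\sum_y \sqrt{p_{xy}}\ket{f(x,y)}=\ket{\alpha_x}$, and likewise for $\psi_y$ and $\beta_y$. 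Consequently the reflection operators correspond: $U_0$ built from the $\ket{\alpha_x}$ matches $R_0$ built from the $\ket{\phi_x}$, and $U_1$ matches $R_1$, so $U=U_1U_0$ corresponds to $W=R_1R_0$.

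To make the equivalence concrete rather than merely formal, I would then compute $U\ket{k}$ directly and compare with $W\ket{x,y}$. Using orthonormality of the polygons within each tessellation, one finds $\braket{\alpha_{x'}}{f(x,y)}=\sqrt{p_{xy}}\,\delta_{xx'}$ and $\braket{\beta_{y'}}{f(x,y)}=\sqrt{q_{yx}}\,\delta_{yy'}$; expanding $U_1U_0\ket{f(x,y)}$ with these inner products yields Eq.~(\ref{Uketk}), whose four terms are term-by-term identical to those in Eq.~(\ref{Wketxy}) once the discriminant inner products $D=\braket{\alpha}{\beta}$ and $C=\braket{\psi}{\phi}$ are matched under $f$. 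Since the two operators agree on a basis of the relevant (edge) subspace and both act trivially on its orthogonal complement, they are the same operator up to the unitary relabeling $f$, which proves the equivalence.

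I expect the main obstacle to be conceptual rather than computational: the care needed in checking that the line-graph construction really produces clique polygons with disjoint, covering supports, and that every edge of $L(\Gamma)$ lies inside some polygon of the union (so that the tessellation axioms of Section~\ref{sec:Staggered} are satisfied). In particular, one must confirm that two edges of $\Gamma$ sharing a vertex in $X$ (respectively $Y$) are covered by the $\alpha$ (respectively $\beta$) tessellation, and that no edge of $L(\Gamma)$ is left uncovered — this is where the bipartite structure of $\Gamma$ is essential, since each edge joins exactly one $X$-vertex and one $Y$-vertex, guaranteeing the two cliques through it. The algebraic comparison of $U\ket{k}$ with $W\ket{x,y}$ is then routine.
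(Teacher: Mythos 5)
Your proposal is correct and follows essentially the same route as the paper: define the polygons $\ket{\alpha_x},\ket{\beta_y}$ on $L(\Gamma)$ via the edge-label bijection $f$, note they form clique tessellations, and verify that $U\ket{f(x,y)}$ (Eq.~(\ref{Uketk})) matches $W\ket{x,y}$ (Eq.~(\ref{Wketxy})) term by term under $f$. The only difference is that you spell out the tessellation axioms (disjoint clique supports, covering, unit norm from stochasticity) and the triviality of $W$ off the edge subspace, which the paper asserts without detail.
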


\subsection{Which Staggered QWs are instances of Szegedy's Model?}\label{sec:FalktoSz}

Consider a staggered quantum walk on a $N$-graph $\Gamma'$ with the following restriction: The intersections of polygons belonging to different tessellations contain one vertex. If polygons $\alpha_{k}$ and $\beta_{k'}$ share one vertex, the ket label of this vertex in Szegedy's model will be $\ket{k,k'}$. Because all vertices must be in both tessellations, this labeling method establishes a bijection $f'$ between the computational basis of ${\cal H}^{N}$ used in the staggered model and the set of labels of the form $\ket{k,k'}$, that will be used to build an instance of Szegedy's model on a bipartite graph $\Gamma$, the line graph of which is $\Gamma'$.\footnote{Given a line graph $\Gamma'$, there is only one bipartite graph $\Gamma$ such that $L(\Gamma)=\Gamma'$~\cite{Whitney:1932}.} By construction, the set of labels of the form  $\ket{k,k'}$ belonging to polygon $\alpha_{k}$ shares the same value of $k$ with different values of $k'$. After applying the bijection $f'$ to the kets of polygon  $\alpha_{k}$, we obtain a vector $\ket{\tilde\alpha_k}$ in ${\cal H}^{m n}$ ($m,n$ are the number of polygons in tessellations $\alpha,\beta$) given by
\begin{eqnarray}
  \ket{\tilde\alpha_k} &=&  \sum_{k'} a_{k,k'} \ket{k,k'}, \label{newalpha_k} 
\end{eqnarray}
with the same amplitudes of the original polygon  $\alpha_{k}$, where $k'$ runs over the subindices of polygons $\beta_{k'}$ that have overlap with $\alpha_{k}$, that is, $\beta_{k'}\cap\alpha_k\neq\emptyset$. Analogously, after applying the bijection $f'$ to the kets of polygon  $\beta_{k'}$, we obtain a vector $\ket{\tilde\beta_{k'}}$ in ${\cal H}^{m n}$ which has same value of $k'$ as in the second slot of the kets, and it is given by
\begin{eqnarray}
  \ket{\tilde\beta_{k'}} &=&  \sum_{k} b_{k',k} \ket{k,k'}, \label{newbeta_k}
\end{eqnarray}
where $k$ runs over the subindices of polygons $\alpha_{k}$ that have overlap with $\beta_{k'}$, that is, $\alpha_k\cap\beta_{k'}\neq\emptyset$.

We must consider a second restriction regarding the amplitudes $a_{k,k'}$ and $b_{k',k}$ of the staggered model. We assume that $a_{k,k'} $ and $b_{k',k}$ are non-negative real numbers obeying the constraints $\sum_{k'} a_{k,k'}^2 = 1$ and $\sum_{k} b_{k',k}^2=1$, that is, $\ket{\tilde\alpha_{k}}$ and $\ket{\tilde\beta_{k'}}$ are real unit vectors.
After imposing this restriction, such instances of the staggered quantum walk
model can be put into the standard Szegedy's framework. In fact, the evolution driven by $U$, which is defined using $\ket{\alpha_{k}}$ and $\ket{\beta_{k}}$ with the restricted version of the amplitudes, is equivalent to the evolution driven by $W$ using $\ket{\tilde\alpha_{k}}$ and $\ket{\tilde\beta_{k'}}$ instead of $\ket{\phi_{x}}$ and $\ket{\psi_{y}}$. The equivalence is brought out by applying bijection $f'$ to the labels of the kets of $U\ket{k}$ given by Eq.~(\ref{Uk}) and comparing with $W\ket{f'(k)}$. The second restriction (real amplitudes) can be eliminated by a straightforward extension of Szegedy's model, which is addressed in the next section.

The instances of the staggered quantum walk models that cannot be cast into Szegedy's framework with the above method are those that have at least one overlapping polygons of tessellations $\alpha$ and $\beta$ with more than one vertex. To prove this fact, let $\ket{j_1}$ and $\ket{j_2}$ be kets associated with two vertices in the overlap between polygons $\alpha_{k}$ and $\beta_{k'}$. There is a contradiction because $\ket{j_1}$ and $\ket{j_2}$ must be mapped to kets of the form $\ket{k,k'}$ both sharing the same label $k$ and $k'$. In fact, in order to build bijection $f'$, we need to map $\ket{j_1}$ to $\ket{k,k'_1}\in {\cal H}^{m n}$ and $\ket{j_2}$ to $\ket{k,k'_2}\in {\cal H}^{m n}$ sharing the same $k$ because both belong to polygon $\alpha_{k}$. We have to take $k'_1\neq k'_2$ because different vertices must have different labels. Besides, $\ket{j_1}$ and  $\ket{j_2}$ belong also to polygon $\beta_{k'}$, then labels $k'_1$ and $k'_2$ must be the same, which is a contradiction. 
Therefore, there is no bijection between the computational basis of ${\cal H}^{N}$ and the set of kets of the form $\ket{k,k'}$ with the property of sharing the same $k$ for polygons of tessellation $\alpha$ and simultaneously sharing the same $k'$ for polygons of tessellation $\beta$. 


The simplest non-trivial examples of staggered QWs that cannot be cast into Szegedy's framework are obtained on the graph with only two connected vertices ($N=2$). Take tessellations $\ket{\alpha}=\big(\ket{0}+\ket{1}\big)/\sqrt 2$ and $\ket{\beta}=\ket{\psi}$, where 
$$\ket{\psi}=\cos\frac{\theta}{2}\,\ket{0}+\sin\frac{\theta}{2}\,\ket{1},$$ 
for some angle $\theta$ that is not a rational multiple of $\pi$. The evolution operator is
$$U=\left[ \begin {array}{cc} \sin  \theta  &\cos \theta \\ 
\noalign{\medskip}-\cos \theta  &\sin \theta \end {array} \right],$$
which is non-trivial because there is no integer $r$ such that $U^r=I$. On the other hand, any Szegedy's quantum walk $W$ on any graph with two edges is trivial because $W^2=I$. Notice that we do not need to go back to Szegedy's framework to prove this result because we can use the equivalent staggered versions. This proves 

\begin{proposition}\label{prop:exampleofconverse}
There are instances of the staggered quantum walk model with two tessellations that cannot be cast into Szegedy's quantum walk framework. Those instances have at least one polygon intersection with at least two vertices in common.
\end{proposition}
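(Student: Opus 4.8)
The plan is to separate the proposition into its two assertions — an \emph{existence} claim (some two-tessellation staggered walk is equivalent to no Szegedy walk) and a \emph{structural} claim (any such walk must carry a polygon intersection of size at least two) — and to dispatch them with, respectively, the contrapositive of the construction in Section~\ref{sec:FalktoSz} and an explicit minimal witness. For the structural claim I would argue by contraposition. The preceding subsection already shows that once every intersection $\alpha_k\cap\beta_{k'}$ is a single vertex, the bijection $f'$ sends the polygons to well-defined vectors $\ket{\tilde\alpha_k},\ket{\tilde\beta_{k'}}$ of the form (\ref{newalpha_k})–(\ref{newbeta_k}) and the resulting $W$ reproduces $U$ up to relabeling. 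The apparent complex-amplitude obstruction is neutralized by the extended Szegedy model of the next section, so singleton intersections already imply castability. Contrapositively, a staggered walk that cannot be cast must have at least one intersection with two or more vertices. I would phrase this so that the complex-amplitude extension is invoked explicitly, since otherwise one could wrongly blame the amplitudes rather than the intersection size for non-castability.

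For existence I would exhibit the minimal witness on the graph $\Gamma'$ of two adjacent vertices ($N=2$), with the single-polygon tessellations $\ket{\alpha}=\big(\ket{0}+\ket{1}\big)/\sqrt{2}$ and $\ket{\beta}=\ket{\psi}$ as in the statement. Computing $U=U_1U_0$ from (\ref{U})–(\ref{U_1}) yields the rotation-type matrix displayed above, whose eigenvalues are $e^{\pm i\theta}$; when $\theta/\pi$ is irrational these are not roots of unity, so $U^r\neq I$ for every integer $r\geq 1$. I would then show that every Szegedy walk with a two-dimensional edge space satisfies $W^2=I$: a connected line graph on two vertices forces the underlying bipartite graph to be the path $P_3$, in which the degree-one side produces singleton polygons, collapsing one of $R_0,R_1$ to the identity on the edge space and leaving $W$ equal to a single Hermitian involution. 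By Proposition~\ref{prop:Sz=Falk} any Szegedy walk equivalent to $U$ would be conjugate to it by a permutation of the basis, so $W^2=I$ would force $U^2=I$, contradicting the infinite order of $U$. This is precisely where staying in the staggered picture is convenient, since the order of $U$ is read off directly and one never returns to Szegedy's framework.

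The main obstacle is the universal identity $W^2=I$ for two-edge Szegedy walks: I must check that it is the \emph{collapse of a reflection} — not an accident of the particular $P_3$ — that is responsible, and I must also cover the disjoint-edges configuration, where $R_0=R_1=I$ and $W=I$ trivially. Everything else is bookkeeping: confirming that $\ket{\alpha}$ and $\ket{\beta}$ genuinely define a two-tessellation staggered walk, that each covers the single edge $\{0,1\}$, and that the lone intersection $\alpha\cap\beta=\{0,1\}$ contains two vertices, so that this one witness simultaneously settles existence and displays the promised structural feature.
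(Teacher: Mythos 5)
Your proposal is correct and follows essentially the same route as the paper: the structural claim is obtained contrapositively from the singleton-intersection construction plus the complex-amplitude extension of Szegedy's model (Proposition~\ref{prop:someFalk=Sz}), and existence is settled by the same $N=2$ witness with $\theta/\pi$ irrational, set against the fact that any Szegedy walk on a two-edge bipartite graph satisfies $W^2=I$ (which the paper merely asserts, while you trace it to the collapse of one reflection on the degree-one side of $P_3$ and also cover the disjoint-edge case). One cosmetic slip: the eigenvalues of $U$ are $\mathrm{e}^{\pm i(\pi/2-\theta)}$, not $\mathrm{e}^{\pm i\theta}$, which does not affect the conclusion since $(\pi/2-\theta)/\pi$ is irrational whenever $\theta/\pi$ is.
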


For example, the glued triangle tree of Fig.~\ref{fig:tess} cannot be cast into Szegedy's framework because it is not a line graph. Even if it was the line graph of a bipartite graph, it would not be possible to obtain an equivalent Szegedy's QW because there are polygons intersections with two vertices in common.

An alternative proof of the above proposition in terms of graph theory is as follows. If the quantum walk graph $\Gamma$ is not a line graph, by inspection we verify that the nine forbidden Beineke induced subgraphs~\cite{Bei70} require either tessellations with two vertices in common or at least three-tessellations. If graph $\Gamma$ is a line graph, then it has a Krausz partition~\cite{Kra43}. If the Krausz partition is two-colorable, $\Gamma$ is a line graph of a bipartite graph and can be reduced to Szegedy's framework by using Eqs.~(\ref{newalpha_k}) and~(\ref{newbeta_k}). If the Krausz partition is not two-collorable, $\Gamma$ requires either tessellations with two vertices in common or at least three-tessellations.

\subsection{Extension of Szegedy's Model}

Szegedy's model can be easily extended by allowing phases in vectors
\begin{eqnarray}
  \ket{\phi_x} &=&  \sum_{y\in Y} \sqrt{p_{x y}}\,\textrm{e}^{i\theta_{xy}} \, \ket{x,y}, \label{newht_phi_x} \\
  \ket{\psi_y}  &=&  \sum_{x\in X} \sqrt{q_{y x}}\,\textrm{e}^{i\theta'_{xy}} \, \ket{x,y}. \label{newht_psi_y}
\end{eqnarray}
Entries $p_{x y},q_{y x}$ are non-negative real numbers obeying the constraints $\sum_y p_{x y}=\sum_x q_{y x}=1$, as in the original model. It is straightforward to show that $W$ is unitary in the extended case. Using the same algorithm of Sec.~\ref{sec:FalktoSz}, we conclude that any staggered quantum walk with the property that the intersections of polygons belonging to different tessellations contain one vertex is equivalent to an instance of the extended version of Szegedy's model. This proves 
\begin{proposition}\label{prop:someFalk=Sz}
Any instance of the staggered quantum walk model with two tessellations which has only one vertex in common in each polygon intersection can be cast into the extended version of Szegedy's quantum walk framework.
\end{proposition}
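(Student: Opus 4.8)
The plan is to re-run the construction of Section~\ref{sec:FalktoSz} essentially verbatim, observing that the first restriction (one vertex per polygon intersection) is the only ingredient needed to build the combinatorial skeleton of the reduction, whereas the second restriction (real non-negative amplitudes) was used solely to land inside the \emph{standard} Szegedy framework. Once the phases of Eqs.~(\ref{newht_phi_x}) and~(\ref{newht_psi_y}) are permitted, the same skeleton delivers an instance of the extended model.

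First I would set up the bijection. Since every polygon intersection $\alpha_k\cap\beta_{k'}$ contains exactly one vertex and every vertex lies in both tessellations, labeling that vertex by the pair $(k,k')$ produces a bijection $f'$ between the computational basis $\{\ket{k}\}$ of $\mathcal{H}^N$ and a set of labels $\{\ket{k,k'}\}$. By Whitney's theorem there is a unique bipartite graph $\Gamma$ with line graph $\Gamma'$; its parts $X$ and $Y$ are indexed by the $\alpha$- and $\beta$-polygons, and $\{k,k'\}$ is an edge exactly when $\alpha_k\cap\beta_{k'}\neq\emptyset$. This is where the single-vertex condition is essential, precisely as in the contradiction argument following Eq.~(\ref{newbeta_k}), and it makes no use of the amplitudes being real.

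Next I would absorb the complex amplitudes into a stochastic matrix plus phases. Writing $a_{k,k'}=\sqrt{p_{kk'}}\,\textrm{e}^{i\theta_{kk'}}$ with $p_{kk'}=|a_{k,k'}|^2$, the normalization $\sum_{k'}|a_{k,k'}|^2=1$ gives $\sum_{k'}p_{kk'}=1$, so $P=(p_{kk'})$ is right-stochastic; likewise $b_{k',k}=\sqrt{q_{k'k}}\,\textrm{e}^{i\theta'_{kk'}}$ yields a right-stochastic $Q$. Applying $f'$ to $\ket{\alpha_k}$ and $\ket{\beta_{k'}}$ then produces
\begin{align*}
\ket{\tilde\alpha_k} &= \sum_{k'}\sqrt{p_{kk'}}\,\textrm{e}^{i\theta_{kk'}}\,\ket{k,k'}, \\
\ket{\tilde\beta_{k'}} &= \sum_{k}\sqrt{q_{k'k}}\,\textrm{e}^{i\theta'_{kk'}}\,\ket{k,k'},
\end{align*}
which are exactly the vectors $\ket{\phi_x}$ and $\ket{\psi_y}$ of the extended model in Eqs.~(\ref{newht_phi_x}) and~(\ref{newht_psi_y}), with $x=k$ and $y=k'$.

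Finally, since $f'$ extends to a unitary isomorphism $F$ from $\mathcal{H}^N$ onto the edge subspace of $\mathcal{H}^{mn}$ carrying $\ket{\alpha_k}\mapsto\ket{\phi_k}$ and $\ket{\beta_{k'}}\mapsto\ket{\psi_{k'}}$, and since $U_0,U_1$ are built from these vectors by the reflection formula $2\sum\ket{\cdot}\bra{\cdot}-I$, conjugation gives $FU_0F^{-1}=R_0$ and $FU_1F^{-1}=R_1$, hence $FUF^{-1}=W$. Concretely this is the comparison of $U\ket{k}$ from Eq.~(\ref{Uk}) with $W\ket{f'(k)}$ already invoked in Section~\ref{sec:FalktoSz}. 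I expect no serious obstacle here, since the heavy lifting is inherited from that section; the only points needing care are confirming that the extended $W$ remains unitary (asserted to be straightforward in the excerpt) and checking that the polar decomposition of the amplitudes imposes nothing beyond the stochasticity of $P$ and $Q$, which the unit-norm conditions supply automatically.
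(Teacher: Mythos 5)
Your proof is correct and is essentially the paper's own argument: the paper establishes Proposition~\ref{prop:someFalk=Sz} by rerunning the bijection algorithm of Sec.~\ref{sec:FalktoSz} with the phase-carrying vectors of Eqs.~(\ref{newht_phi_x}) and~(\ref{newht_psi_y}), exactly as you do, and your polar decomposition $a_{k,k'}=\sqrt{p_{kk'}}\,\textrm{e}^{i\theta_{kk'}}$ simply makes explicit the step the paper leaves implicit. The only point to phrase carefully is that $FU_0F^{-1}$ equals $R_0$ restricted to the edge subspace of ${\cal H}^{mn}$ (on whose orthogonal complement $R_0$ acts as $-I$, so $W$ acts trivially there), which is precisely the sense of equivalence the paper itself intends.
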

Notice that a staggered quantum walk model on a graph $\Gamma$ with two tessellations which have only one vertex in common in each polygon intersection has a two-colorable Krauz partition. Therefore  $\Gamma$ is a line graph of a bipartite graph.

\subsection{Searching in Szegedy's Model}

Szegedy's model can be used to search for a vertex in a set of marked vertices in some graph $\Gamma$. The strategy is an extension of the classical method, which uses random walks on the same graph. The key concept in the classical case is the \textit{hitting time}, which is the average time to hit a marked vertex for the first time using a random walk with stochastic (or transition) matrix $P$ after specifying some initial condition. There is a clever way to find the classical hitting time using the following idea: Convert the original graph $\Gamma$ into a new directed graph $\Gamma'$ (with a new stochastic matrix $P'$) by removing the edges that leave the marked vertices. Marked vertices are converted into sinks. The hitting time obtained using $P'$ is the same using $P$ because as soon as the walker hits a marked vertex using some edge that comes from a non-marked vertex, the walker needs not to go ahead. The advantage of using $P'$ is that it does not have eigenvectors associated with eigenvalue 1 (the spectral norm is smaller than 1), that is, $I-P'$ is an invertible matrix. We need to invert $I-P'$ to find the hitting time if we want to avoid the calculation of the limiting probability distribution. 

Szegedy proposed a quantum version of this procedure. Let $\Gamma(X,E)$ be the original graph, where $X$ is the set of vertices and $E$ the set of edges. Define $\Gamma(X,X',E')$ as a bipartite graph obtained from  $\Gamma(X,E)$ by duplicating $X$ and by converting edges $\{x_i,x_j\}\in E$ into $\{x_i,x'_j\}\in E'$. Until this point, no vertex has been marked and the quantum walk described in Sec.~\ref{sec:Szegedy} can be used taking $P=Q$. As before, define $\Gamma'(X,X',E')$ as a directed bipartite graph by removing the edges of $\Gamma(X,X',E')$ that leaves the marked vertices of $X$ and $X'$. Add new non-directed edges connecting a marked $x$ with its corresponding copy $x'$ for all marked vertices. The quantum walk described in Sec.~\ref{sec:Szegedy} can be used taking $P'=Q'$, where $P'$ and $Q'$ are the new stochastic matrices of $\Gamma'(X,X',E')$. With this framework and taking 
\begin{equation}
\ket{\psi_0}=\frac{1}{\sqrt n}\sum_{x y} \sqrt{p_{x y}} \ket{x,y}
\end{equation}
as initial condition in ${\cal H}^{n}\otimes {\cal H}^{n}$,  Szegedy showed that the detection problem on $\Gamma'(X,X',E')$ can be solved with a quadratic speedup compared to the time complexity of the same problem using random walks with symmetric and ergodic stochastic matrix $P$ in $\Gamma(X,E)$~\cite{Szegedy:2004}.

The method of converting Szegedy's QWs into staggered QWs described in Sec.~\ref{subsec:Szegedisstag} can be used in the searching context after some minor modifications. Let $L(\Gamma)$ be the line graph of $\Gamma(X,X',E')$ and define polygons $\alpha_x$ and $\beta_y$ as the ones in Eqs.~(\ref{eq:alpha_x}) and~(\ref{eq:beta_y}) for all $x$ and $y$ that do not belong to the set of marked vertices. Those sets of polygons define partial tessellations of $L(\Gamma)$ which is used to define a generalized staggered QW equivalent to Szegedy's QW on $\Gamma'(X,X',E')$. The initial condition in ${\cal H}^{N}$ must be
\begin{equation}
\ket{\psi_0}=\frac{1}{\sqrt n}\sum_{x y} \sqrt{p_{x y}} \ket{f(x,y)},
\end{equation}
where $f$ is the bijection from the $E$ to the labels of $L(\Gamma)$. 

The last ingredient in Szegedy's QW is the measurement, which is performed in the computational basis of $X$. At this moment, the copy $X'$ is traced out. If the measurement is performed at the right time, which is the \textit{quantum hitting time}, the probability of finding a marked vertex will be high enough. The version using generalized staggered QWs is obtained as follows. For $x\in X$ define projectors 
\begin{equation}
\Pi_x=\sum_{x'\in \alpha_x}\ket{x'}\bra{x'}.
\end{equation}
Notice that $\sum_x \Pi_x=I$. By measuring observable
\begin{equation}
{\cal O}=\sum_{x\in X} x \Pi_x,
\end{equation}
we obtain some value $x_0\in X$, which is expected to be one of the marked vertices if the measurement is performed at the hitting time. In the staggered version, we are looking for the missing polygons of the (partial) $\alpha$-tessellation. This proves

\begin{proposition}\label{prop:searching}
Szegedy's searching framework on a bipartite graph $\Gamma$ can be simulated by instances of generalized staggered quantum walks on the line graph of $\Gamma$.
\end{proposition}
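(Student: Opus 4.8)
The plan is to assemble the pieces already prepared for the non-marked case and to track how the single modification---omitting the polygons attached to marked vertices---propagates through the dynamics and through the final measurement. Concretely, I would start from Szegedy's searching construction reviewed above: the directed bipartite graph $\Gamma'(X,X',E')$ obtained from $\Gamma(X,X',E')$ by deleting the edges that leave the marked vertices and inserting the self-loops joining each marked $x$ to its copy $x'$, the evolution operator $W=R_1R_0$ built from the modified stochastic matrices $P'=Q'$, the initial state $\ket{\psi_0}$, and the measurement performed in the computational basis of $X$ with the copy $X'$ traced out. The goal is to exhibit a generalized staggered QW on $L(\Gamma)$ whose dynamics, initial condition and measurement each correspond to these, so that the whole search procedure (and hence the hitting time) is reproduced.

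First I would transport this data to the line graph $L(\Gamma)$ through the edge-label bijection $f$ of Proposition~\ref{prop:Sz=Falk}, defining the polygons $\ket{\alpha_x}$ and $\ket{\beta_y}$ of Eqs.~(\ref{eq:alpha_x}) and~(\ref{eq:beta_y}) only for the non-marked $x\in X$ and $y\in X'$. By Section~\ref{sec:ExtStaggered} these incomplete families constitute a legitimate generalized staggered QW with partial tessellations, and its marked vertices---the vertices of $L(\Gamma)$ omitted from at least one tessellation---are exactly the edges of $\Gamma$ incident to a marked endpoint, which match Szegedy's marked set under $f$. The next step is to verify that the evolution operators agree on the subspace that actually carries the walk. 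Here I would reuse verbatim the computation running from Eq.~(\ref{Uketk}) to Eq.~(\ref{Wketxy}): omitting a polygon $\alpha_x$ (respectively $\beta_y$) is the same as letting its amplitudes vanish, so the expansion of $U\ket{k}$ in terms of the surviving $\ket{\alpha_{x'}}$ and $\ket{\beta_{y'}}$ coincides term by term with that of $W\ket{x,y}$ once $f$ is applied to the labels; applying $f$ to the initial state then identifies $\ket{\psi_0}$ in the two models.

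The remaining step is to match the measurements. I would check that the projectors $\Pi_x=\sum_{x'\in\alpha_x}\ket{x'}\bra{x'}$ reconstruct, through $f$, precisely the coarse-graining that Szegedy performs when he records a vertex of $X$ and discards the copy set $X'$; since $\sum_x\Pi_x=I$, the observable ${\cal O}$ yields the same outcome distribution, and obtaining a value $x_0$ amounts to detecting one of the missing $\alpha$-polygons. Combining the three matchings---dynamics, initial state and measurement---then gives the equivalence of the entire search procedure, and in particular preserves the quantum hitting time.

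I expect the main obstacle to be the bookkeeping around the marked vertices rather than any hard estimate: one must confirm that merely dropping the polygons $\alpha_x,\beta_y$ for marked $x,y$ genuinely reproduces Szegedy's directed modification of the transition matrices, namely the creation of sinks by removing the out-edges of marked vertices together with the added self-loops. The delicate point is that in Szegedy's picture a marked vertex survives as a trivial (self-loop) polygon, whereas in the staggered picture it is literally absent; I would therefore spend most of the effort checking that, on the edge subspace supporting $\ket{\psi_0}$ and under the coarse-grained observable ${\cal O}$, this difference is invisible---that no spurious phase or renormalization is introduced---so that the two notions of ``marked'' truly coincide under $f$ and the simulation is exact.
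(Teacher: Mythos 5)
Your proposal is correct and follows essentially the same route as the paper: define the partial tessellations on $L(\Gamma)$ using only the non-marked vertices, transport the dynamics and the initial state through the edge bijection $f$, and match Szegedy's measurement with the observable ${\cal O}$ built from the projectors $\Pi_x$, so that marked vertices correspond to missing $\alpha$-polygons. The ``delicate point'' you flag---that in Szegedy's picture a marked vertex retains the self-loop polygon $\ket{x,x'}$ while the staggered picture omits it entirely---is genuine but resolves exactly as you anticipate: the self-loop edges $\{x,x'\}$ are not vertices of $L(\Gamma)$, the span of the original edge states is invariant under Szegedy's modified walk, and on that subspace the marked projector terms annihilate every edge state, so the two evolutions coincide under $f$; the paper's own proof asserts this equivalence without spelling out the invariant-subspace check you propose to carry out.
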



\section{Hitting Time on Staggered Walks}\label{sec:HT}

We have showed that any instance of the extended Szegedy's model can be cast
into an equivalent staggered quantum walk. The converse statement is not true. There are instances of staggered walks that are not equivalent to  
any instance of the extended version of Szegedy's model. We take advantage of the similarity of both models to define a notion of hitting time on the staggered model, which is specially useful on those staggered walks that cannot be cast into Szegedy's framework.

Let $M$ be the set of marked vertices. Inspired again by Falk's paper~\cite{Falk:2013}, the evolution operator $U_M$ for searching a marked vertex is defined by
\begin{equation}\label{URM}
	U_M = R_M U_1 R_M U_0,
\end{equation}
where
\begin{equation}
	 R_M = 2\sum_{m\in M} \ket{m}\bra{m} - I.
\end{equation}
$R_M$ is a orthogonal reflection and flips non-marked vertices.

\begin{definition}\label{defHT}
The quantum hitting time $H_M$ of a staggered quantum walk starting from the initial condition $\ket{\psi(0)}$ is the smallest number of steps $T$ such that
\begin{equation}\label{FT}
	\frac{1}{T+1}\sum_{t=0}^{T}\left\|\, \ket{\psi(t)}-\ket{\psi(0)}\, \right\|^2\ge 1-\frac{|M|}{N},
\end{equation}
where $\ket{\psi(t)}=(U_M)^t\ket{\psi(0)}$ and $N$ is the number of vertices. 
\end{definition}
The hitting time depends on the initial condition. An important one in this context is the uniform distribution because the value $1-|M|/N$ is the distance between the uniform distribution over all vertices and the uniform distribution over the marked vertices.

\section{Searching using Staggered Walks}\label{sec:SearchusingStag}

$R_M$ can be interpreted as an operator associated with a partial tessellation with $|M|$ polygons each one with one marked vertex. There is a simpler way to analyze the evolution operator (\ref{URM}) by using the fact that $R_M U_1 R_M$ is an orthogonal reflection.

\begin{proposition}\label{prop:UM}
The evolution operator $U_M$ for searching a marked vertex using a staggered quantum walk is a product of two orthogonal reflections.
\end{proposition}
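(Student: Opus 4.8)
The plan is to exploit exactly the grouping hinted at just before the statement: rewrite the evolution operator as
$$U_M = \big(R_M\, U_1\, R_M\big)\, U_0,$$
and show that the bracketed factor $R_M U_1 R_M$ is itself an orthogonal reflection, while $U_0$ is an orthogonal reflection by its very construction from the tessellation $\alpha$. Since a product of two orthogonal reflections is precisely the form asserted, this settles the proposition.

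First I would compute $R_M U_1 R_M$ explicitly. Substituting $U_1 = 2\sum_k \ket{\beta_k}\bra{\beta_k} - I$ and using that $R_M$ is Hermitian with $R_M^2 = I$, I obtain
$$R_M U_1 R_M = 2\sum_k \big(R_M\ket{\beta_k}\big)\big(R_M\ket{\beta_k}\big)^\dagger - I.$$
Setting $\ket{\beta_k'} := R_M\ket{\beta_k}$, this reads $R_M U_1 R_M = 2\sum_k \ket{\beta_k'}\bra{\beta_k'} - I$, which already has the structural form of a tessellation reflection. I would then verify the three requirements of the definition of orthogonal reflection given above for the family $\{\ket{\beta_k'}\}$. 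Orthonormality is immediate from the unitarity of $R_M$, since $\braket{\beta_k'}{\beta_{k'}'} = \braket{\beta_k}{\beta_{k'}} = \delta_{kk'}$; this makes $R_M U_1 R_M$ unitary and Hermitian with the $\ket{\beta_k'}$ as its eigenvectors of eigenvalue $1$. For the two support conditions I would use the key observation that $R_M$ acts diagonally in the computational basis, merely multiplying the non-marked basis states by $-1$ and fixing the marked ones. Hence $\ket{\beta_k'}$ has exactly the same set of nonzero entries (the same support) as $\ket{\beta_k}$, so the supports of the $\ket{\beta_k'}$ stay pairwise non-overlapping.

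The only point demanding care, and the would-be main obstacle, is the third condition: that these sign flips introduce no zero entry in $\sum_k \ket{\beta_k'}$ through cancellation. This is resolved precisely by the non-overlapping property of the original tessellation $\beta$: every vertex lies in a single polygon $\beta_k$, so at each vertex the sum $\sum_k \ket{\beta_k'}$ receives a single nonzero contribution with the same absolute value as in $\sum_k \ket{\beta_k}$, and flipping the sign of a nonzero amplitude keeps it nonzero. Thus there is nothing to cancel against and no zero entries can arise. Having established that $R_M U_1 R_M$ is an orthogonal reflection, and that $U_0$ is one by construction, the identity $U_M = (R_M U_1 R_M)\, U_0$ exhibits $U_M$ as a product of two orthogonal reflections, completing the proof.
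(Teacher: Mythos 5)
Your proposal is correct and takes essentially the same approach as the paper: both use the decomposition $U_M = (R_M U_1 R_M)\, U_0$, write $R_M U_1 R_M = 2\sum_k \ket{\beta'_k}\bra{\beta'_k} - I$ with $\ket{\beta'_k} = R_M \ket{\beta_k}$, and conclude that this factor is an orthogonal reflection. The only difference is that you spell out the verification (supports unchanged since $R_M$ is diagonal, hence no overlap and no cancellation in the sum), which the paper states is ``straightforward'' after exhibiting $\ket{\beta'_k}$ explicitly.
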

\begin{proof}
Define $U'_1=R_M U_1 R_M$. Using Eq.~(\ref{U_1}), we obtain
\begin{equation}\label{Up1app}
  U'_1= 2\sum_{k=0}^{n-1} \ket{\beta'_k}\bra{\beta'_k} - I,
\end{equation}
where $\ket{\beta'_k}=R_M\ket{\beta_k}$. Using Eq.~(\ref{beta_k}), we obtain
\begin{equation}\label{betapk}
  \ket{\beta'_k}= 
\begin{cases}
    -\ket{\beta_k},& \textrm{if }  \beta_k \cap M = \emptyset\\
    \sum_{k'=0}^{n-1} (-1)^{f(k')} b_{k k'}\ket{k'},& \text{otherwise,}
\end{cases}
\end{equation}
where $f(k)=0$ if $k\in M$ and $f(k)=1$ otherwise. Therefore
$U_M=U'_1 \, U_0$ and using Eqs.~(\ref{Up1app}) and~(\ref{betapk}) it is straightforward to verify that $U'_1$ is an orthogonal reflection. 
$\blacksquare$\end{proof}

Next proposition shows that if we mark all vertices inside some polygons of tessellation $\alpha$ or $\beta$, the walker is not able to find a marked polygon.

\begin{proposition}\label{prop:UM=U}
If the set of marked vertices is a union of polygons of tessellation $\beta$, then $U_M=U$. If the set of marked vertices is a union of polygons of tessellation $\alpha$, then $U_M$ is similar to $U$.
\end{proposition}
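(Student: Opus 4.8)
The plan is to reduce both claims to a single structural observation: marking an \emph{entire} polygon of a tessellation leaves the corresponding orthogonal reflection invariant under conjugation by $R_M$. Concretely, I would first establish the two auxiliary identities that if $M$ is a union of polygons of tessellation $\beta$ then $R_M U_1 R_M = U_1$, and symmetrically that if $M$ is a union of polygons of tessellation $\alpha$ then $R_M U_0 R_M = U_0$.

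To prove the first of these I would reuse the computation from the proof of Proposition~\ref{prop:UM}. Writing $\ket{\beta'_k}=R_M\ket{\beta_k}$, Eq.~(\ref{betapk}) shows that $\ket{\beta'_k}$ equals $\pm\ket{\beta_k}$ precisely when every vertex of $\beta_k$ carries the same marking status. Since the polygons of a tessellation are non-overlapping and cover all vertices, the hypothesis that $M$ is a union of $\beta$-polygons forces each $\beta_k$ to be either entirely contained in $M$ (so $\ket{\beta'_k}=+\ket{\beta_k}$ via the ``otherwise'' branch with all signs $+1$) or entirely disjoint from $M$ (so $\ket{\beta'_k}=-\ket{\beta_k}$ via the first branch). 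In either case the rank-one projector is unchanged, $\ket{\beta'_k}\bra{\beta'_k}=\ket{\beta_k}\bra{\beta_k}$, whence $R_M U_1 R_M=2\sum_k\ket{\beta'_k}\bra{\beta'_k}-I=U_1$. The identity $R_M U_0 R_M = U_0$ then follows by the identical argument applied to the $\alpha$-polygons.

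With these identities in hand, the two claims are immediate. For the first, I would substitute $R_M U_1 R_M = U_1$ directly into the factorization $U_M = (R_M U_1 R_M)\,U_0$ to obtain $U_M = U_1 U_0 = U$. For the second, the reflection $R_M$ now sits around $U_1$ rather than $U_0$, so it does not cancel against anything directly; instead I would conjugate the entire operator, computing $R_M U_M R_M = U_1\,(R_M U_0 R_M) = U_1 U_0 = U$ using $R_M^2=I$ together with $R_M U_0 R_M = U_0$. Since $R_M$ is unitary with $R_M^{-1}=R_M$, this rearranges to $U_M = R_M\,U\,R_M^{-1}$, exhibiting $U_M$ as similar to $U$.

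The proof is essentially bookkeeping once the conjugation-invariance of the reflection is noticed, so I do not expect a serious obstacle. The only point requiring care is the asymmetry between the two cases, which is not a mathematical difficulty but a consequence of the placement of $R_M$ in $U_M = R_M U_1 R_M U_0$: the $\beta$-reflection is conjugated in place and is therefore fixed outright, yielding exact equality, whereas fixing the $\alpha$-reflection requires conjugating the whole product, which produces equality only after applying the similarity transformation $R_M(\cdot)R_M$ and hence cannot in general be strengthened from ``similar to $U$'' to ``equal to $U$''.
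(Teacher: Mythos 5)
Your proposal is correct and follows essentially the same route as the paper's proof: both use Eq.~(\ref{betapk}) (and its $\alpha$-analogue) to show that marking whole polygons only flips signs of the polygon vectors, so the projectors and hence the reflections $U_1$ (resp.\ $U_0$) are invariant under conjugation by $R_M$, giving $U_M=U$ in the $\beta$ case and $U_M=R_M U R_M$ in the $\alpha$ case. Your closing remark on the asymmetry coming from the placement of $R_M$ in $U_M=R_M U_1 R_M U_0$ is a nice explicit observation that the paper leaves implicit.
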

\begin{proof} 
If the set of marked vertices is a union of polygons of tessellation $\beta$, Eq.~(\ref{betapk}) reduces to 
\begin{equation}\label{newbetapk}
  \ket{\beta'_k}= 
\begin{cases}
    -\ket{\beta_k},& \textrm{if }  \beta_k \cap M = \emptyset\\
    \ket{\beta_k},& \text{otherwise.}
\end{cases}
\end{equation}
Substituting into Eq.~(\ref{Up1app}), we obtain that $U'_1=U_1$. Therefore $U_M=U$.

The proof of the second statement is analogous. Define $U'_0=R_M U_0 R_M$. Using Eq.~(\ref{U_0}), we obtain
\begin{equation}\label{Up0app}
  U'_0= 2\sum_{k=0}^{m-1} \ket{\alpha'_k}\bra{\alpha'_k} - I,
\end{equation}
where $\ket{\alpha'_k}=R_M\ket{\alpha_k}$. If the set of marked vertices is a union of polygons of tessellation $\alpha$, then using Eq.~(\ref{alpha_k}) we obtain
\begin{equation}\label{alphapk}
  \ket{\alpha'_k}= 
\begin{cases}
    -\ket{\alpha_k},& \textrm{if }  \alpha_k \cap M = \emptyset\\
    \ket{\alpha_k},& \text{otherwise.}
\end{cases}
\end{equation}
Therefore $U'_0=U_0$ and
$U_M=R_M\,U\, R_M$. 
$\blacksquare$\end{proof}

Useful properties of the QW dynamics are obtained from the spectrum of $U_M$. The eigenvalues and eigenvectors can be obtained from the singular values and vectors of the discriminant (or Gram-like) matrix $D$ the entries of which are
\begin{equation}
 D_{k\,k'}=\braket{\alpha_{k}}{\beta'_{k'}},
\end{equation}
where $\ket{\alpha_{k}}$ is given by Eq.~(\ref{alpha_k}) and $\ket{\beta'_{k'}}$ is given by Eq.~(\ref{betapk}). Let $\ket{\nu_j}$ and $\ket{\mu_j}$ be the right and left singular vectors of $D$ respectively and let $\cos\theta_j$ be the corresponding singular values obeying $D\ket{\nu_j}=\cos\theta_j\ket{\mu_j}$ and  $D^\dagger\ket{\mu_j}=\cos\theta_j\ket{\nu_j}$ and $0\le \theta_j\le \pi/2$. Suppose that $\theta_1,...,\theta_s$ are the angles in the open interval $(0,\pi/2)$, where $s$ is the number of singular values of $D$ in the open interval $(0,1)$. Define
\begin{eqnarray}
 A &=& \sum_{k=0}^{m-1} \ket{\alpha_k}\bra{k}, \\
 B &=& \sum_{k'=0}^{n-1} \ket{\beta'_{k'}}\bra{k'}. 
\end{eqnarray}
Let ${\cal A},{\cal B}$ be the vector spaces spanned by vectors $\ket{\alpha_{k}}$ and vectors $\ket{\beta'_{k'}}$ respectively. We reproduce a theorem from~\cite{Szegedy:2004} adapted to our context.

\begin{theorem}\label{theo:Eigenvectors}
The spectrum of $U_M$ obeys:
\begin{enumerate}
\item The eigenvalues of $U_M$ with nonzero imaginary part are exactly $\textrm{e}^{\pm 2i\theta_j}$ for $j=1,...,s$. The corresponding normalized eigenvectors are
\begin{equation}
\frac{A\ket{\mu_j}-\textrm{e}^{\pm i\theta_j}B\ket{\nu_j}}{\sqrt 2 \sin\theta_j}.
\end{equation}
\item ${\cal A}\cap {\cal B}^\perp\,+\,{\cal A}^\perp\cap {\cal B}$ is the $-1$ eigenspace of $U_M$. ${\cal A}\cap {\cal B}^\perp$ is spanned by the left singular vectors of $D$ with singular value 0. ${\cal A}^\perp\cap {\cal B}$ is spanned by the right singular vectors of $D$ with singular value 0.  
\item ${\cal A}\cap {\cal B}\,+\,{\cal A}^\perp\cap {\cal B}^\perp$ is the $+1$ eigenspace of $U_M$. ${\cal A}\cap {\cal B}$ is spanned by the left (or right) singular vectors of $D$ with singular value 1. 
\end{enumerate}
\end{theorem}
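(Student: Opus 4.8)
The plan is to recognize $U_M=U'_1\,U_0$ as a product of two reflections about subspaces and to diagonalize it through the principal angles between those subspaces, exactly in the spirit of the Szegedy setting. First I would record that the families $\big\{\ket{\alpha_k}\big\}$ and $\big\{\ket{\beta'_{k'}}\big\}$ are each orthonormal: the $\alpha$-tessellation is non-overlapping, and $\ket{\beta'_{k'}}=R_M\ket{\beta_{k'}}$ is obtained from an orthonormal family by the unitary $R_M$. Hence, writing ${\cal A}$ and ${\cal B}$ for their spans and $\Pi_{\cal A}=\sum_k\ket{\alpha_k}\bra{\alpha_k}$, $\Pi_{\cal B}=\sum_{k'}\ket{\beta'_{k'}}\bra{\beta'_{k'}}$ for the associated orthogonal projectors, Eqs.~(\ref{U_0}) and~(\ref{Up1app}) read $U_0=2\Pi_{\cal A}-I$ and $U'_1=2\Pi_{\cal B}-I$. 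The operators $A$ and $B$ are then isometries, $A^\dagger A=I$ and $B^\dagger B=I$, with $AA^\dagger=\Pi_{\cal A}$, $BB^\dagger=\Pi_{\cal B}$, and the discriminant factors as $D=A^\dagger B$. These identities reduce everything to the geometry of the two subspaces ${\cal A}$ and ${\cal B}$.

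The core step is the two-dimensional block analysis driving part~1. From $D\ket{\nu_j}=\cos\theta_j\ket{\mu_j}$ and $D^\dagger\ket{\mu_j}=\cos\theta_j\ket{\nu_j}$ I would derive, by applying $A$ and $B$, the relations $\Pi_{\cal A}B\ket{\nu_j}=\cos\theta_j\,A\ket{\mu_j}$ and $\Pi_{\cal B}A\ket{\mu_j}=\cos\theta_j\,B\ket{\nu_j}$, together with $\braket{A\mu_j}{B\nu_j}=\cos\theta_j$ (both $A\ket{\mu_j}$ and $B\ket{\nu_j}$ being unit vectors). For each $\theta_j\in(0,\pi/2)$ the plane ${\cal S}_j=\mathrm{span}\big\{A\ket{\mu_j},\,B\ket{\nu_j}\big\}$ is then invariant under both $U_0$ and $U'_1$, and the matrix of $U_M=U'_1U_0$ in the basis $\big(A\ket{\mu_j},B\ket{\nu_j}\big)$ comes out as
\[
\begin{pmatrix} -1 & -2\cos\theta_j \\ 2\cos\theta_j & 4\cos^2\theta_j-1 \end{pmatrix},
\]
whose trace $2\cos2\theta_j$ and determinant $1$ give the eigenvalues $\textrm{e}^{\pm2i\theta_j}$. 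A short computation then confirms that $A\ket{\mu_j}-\textrm{e}^{\pm i\theta_j}B\ket{\nu_j}$ is the corresponding eigenvector, and the inner-product value $\cos\theta_j$ yields the norm $\sqrt2\,\sin\theta_j$, reproducing the stated normalized form.

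It remains to handle the degenerate singular values and the part of the space unseen by $D$, which yields parts~2 and~3. A singular pair with value $1$ forces $\braket{A\mu}{B\nu}=1$, hence $A\ket{\mu}=B\ket{\nu}\in{\cal A}\cap{\cal B}$, on which $U_0=U'_1=I$ so $U_M=I$; conversely such vectors span ${\cal A}\cap{\cal B}$, giving the identification in part~3. A left singular vector with value $0$ satisfies $\Pi_{\cal B}A\ket{\mu}=BB^\dagger A\ket{\mu}=0$, so $A\ket{\mu}\in{\cal A}\cap{\cal B}^\perp$, where $U_0=I$ and $U'_1=-I$, whence $U_M=-I$; symmetrically the right singular vectors with value $0$ span ${\cal A}^\perp\cap{\cal B}$, where $U_0=-I$ and $U'_1=I$ again give $U_M=-I$. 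Finally, on ${\cal A}^\perp\cap{\cal B}^\perp$ both reflections act as $-I$, so $U_M=I$; this piece is invisible to $D$ and must be inserted by hand into the $+1$ eigenspace.

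The main obstacle I anticipate is not any single computation but verifying \emph{completeness}: that the planes ${\cal S}_j$ together with ${\cal A}\cap{\cal B}$, ${\cal A}\cap{\cal B}^\perp$, ${\cal A}^\perp\cap{\cal B}$ and ${\cal A}^\perp\cap{\cal B}^\perp$ form a mutually orthogonal direct-sum decomposition of the whole of ${\cal H}^N$. I would settle this with the Jordan/CS decomposition for a pair of subspaces: the nonzero, non-unit singular values of $D$ count precisely the generic planes, the singular values $0$ and $1$ account for the four intersection subspaces inside ${\cal A}+{\cal B}$, and the orthogonal complement of ${\cal A}+{\cal B}$ supplies ${\cal A}^\perp\cap{\cal B}^\perp$. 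A dimension count then shows these invariant pieces exhaust the space, so no further eigenvalues arise and the spectrum is exactly as claimed.
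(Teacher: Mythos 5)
The first thing to note is that the paper never proves this theorem: it is explicitly ``reproduced from~\cite{Szegedy:2004}, adapted to our context,'' so there is no in-paper argument to compare against; the relevant comparison is with Szegedy's original spectral lemma, which the paper silently relies on. Your proposal is a correct, self-contained proof, and it is in essence that same argument, reconstructed rather than cited. The key computations all check out: the identities $A^\dagger A=B^\dagger B=I$, $AA^\dagger=\Pi_{\cal A}$, $BB^\dagger=\Pi_{\cal B}$, $D=A^\dagger B$ are right; the invariant-plane analysis is right, since $U_M A\ket{\mu_j}=2\cos\theta_j\,B\ket{\nu_j}-A\ket{\mu_j}$ and $U_M B\ket{\nu_j}=-2\cos\theta_j\,A\ket{\mu_j}+(4\cos^2\theta_j-1)B\ket{\nu_j}$, giving trace $2\cos 2\theta_j$ and determinant $1$, hence eigenvalues $\mathrm{e}^{\pm 2i\theta_j}$; the eigenvector $A\ket{\mu_j}-\mathrm{e}^{\pm i\theta_j}B\ket{\nu_j}$ and its norm $\sqrt{2}\sin\theta_j$ follow from $\braket{A\mu_j}{B\nu_j}=\cos\theta_j$; and the identifications for singular values $0$ and $1$ and for ${\cal A}^\perp\cap{\cal B}^\perp$ are correct. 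You were also right that completeness is the one step that genuinely needs an argument, and your resolution is the standard and correct one: the SVD of $D$ yields orthonormal bases $\{A\ket{\mu_j}\}$ of ${\cal A}$ and $\{B\ket{\nu_j}\}$ of ${\cal B}$ with $\braket{A\mu_i}{B\nu_j}=\delta_{ij}\cos\theta_j$, so the planes ${\cal S}_j$ and the intersection subspaces are mutually orthogonal, fill up ${\cal A}+{\cal B}$, and together with $({\cal A}+{\cal B})^\perp={\cal A}^\perp\cap{\cal B}^\perp$ exhaust ${\cal H}^N$, leaving no room for further eigenvalues. Two remarks worth adding if this were written up: first, your argument survives the staggered model's complex amplitudes (the paper's $D$ need not be real, unlike Szegedy's), because singular values remain real and non-negative, which is exactly what makes the inner product $\cos\theta_j$ real and the norm computation valid; second, in parts 2 and 3 the theorem's phrase ``spanned by the \ldots singular vectors of $D$'' must be read as spanned by the images under $A$ (respectively $B$) of those singular vectors, which is what your argument actually establishes.
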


The above theorem can be used to find most eigenvectors of the evolution operator when the singular values and vectors of the discriminant matrix can be explicitly found. The eigenvectors that span ${\cal A}^\perp\cap {\cal B}^\perp$ are not described in Theorem~\ref{theo:Eigenvectors}. The staggered Fourier transform is an alternative method to find the spectrum of $U_M$, but it depends on translation symmetries of the tessellating process. The staggered Fourier transform was used in Refs.~\cite{Ambainis:2013,PBF15,Santos:2015}.

\section{Conclusions and Discussions}\label{sec:Conc}

In this paper we have formally defined the staggered QW model on graphs. This model can be obtained by tessellating the graph using polygons obeying locality restrictions. The polygons are cliques and each tessellation covers all vertices. All edges must be inside a polygon of some tessellation. An alternative equivalent definition is based on the notion of orthogonal reflections in the basis associated to the graph, which are unitary and Hermitian operators with the following property: The nonzero amplitudes of the eigenvectors associated with eigenvalue 1 do not have overlap and the sum of those eigenvectors does not have zero entries. The evolution operator of staggered QWs must be a product of orthogonal reflections that covers all edges of the graph.

We have showed that any instance of Szegedy's QW model on a bipartite graph $\Gamma$ is equivalent to an instance of the staggered QW model on the line graph of $\Gamma$ with a special kind of tessellations: There is only one vertex in each polygon intersection and the tessellation union is a Krausz partition. On the other hand, there are instances of the staggered model with two tessellations that cannot be cast into Szegedy's framework, which are the ones with two or more vertices in common in at least one polygon intersection. This characterization shows that Szegedy's model is a restricted version of QWs, which is defined only on line graphs of bipartite graphs, a subset of perfect graphs. This fact helped researchers to obtain generic results using Szegedy's model, different from the results about the coined version, which is usually found in specific graphs. One of the main results of Szegedy's model is that the quantum hitting time is quadratically smaller compared with the classical hitting time on symmetric and ergodic Markov chains. Notice that the classical random walk is defined on a graph $\Gamma$ while the quantum walk is defined on the line graph of the bipartite graph associated to $\Gamma$. When the expressions for the hitting time on different graphs are compared, they have the same structure and it is possible to reach the result. 

There are two main forms of defining searching algorithms in the staggered model: (1) Searching \textit{\`a la} Szegedy is the method that uses partial tessellations in the generalized staggered model. The marked vertices are the ones in the missing polygons. (2) Searching \textit{\`a la} Falk uses the standard tessellation procedure but interlaces the orthogonal reflections with a reflection around the marked vertices. In both cases we can use Szegedy's spectral theorem to calculate the spectrum of the evolution operator when using two tessellations. We have also defined the quantum hitting time on staggered QWs using reflections around the marked vertices which generalizes Szegedy's original definition. We have shown that the evolution operator, originally defined as a product of four reflections, can be written as a product of two reflections.

Because all vertices of the two-dimensional lattice belong to the intersection of four maximum cliques, it is necessary to employ at least four tessellations. The case with exactly four tessellations has one vertex in the polygon intersections but cannot be cast into Szegedy's framework because it employs more than two tessellations. Future works on this issue are important to establish the properties of staggered QWs on two-dimensional lattices, the analysis of which is missing in literature. 

An interesting subject is the equivalence between the staggered model and the coined quantum walk model. This has been established for the line and cycle~\cite{HKS05,PBF15}. A systematic analysis on the equivalence of those models on generic graphs is an important issue for future works.


\section*{Acknowledgements}
 RP acknowledges financial support from Faperj (grant n.~E-26/102.350/2013) and CNPq (grants n.~304709/2011-5, 4741\-43/2013-9, and 400216/2014-0). RAMS acknowledges financial support from Capes/Faperj E-45/2013. RP thanks helpful discussions with Stefan Boettcher and Andris Ambainis' group.


\begin{thebibliography}{10}

\bibitem{Aharonov:1993}
Y.~Aharonov, L.~Davidovich, and N.~Zagury.
\newblock Quantum random walks.
\newblock {\em Physical Review A}, 48(2):1687--1690, 1993.

\bibitem{Meyer96}
D.A. Meyer.
\newblock From quantum cellular automata to quantum lattice gases.
\newblock {\em Journal of Statistical Physics}, 85(5-6):551--574, 1996.

\bibitem{Farhi:1998}
E.~Farhi and S.~Gutmann.
\newblock Quantum computation and decision trees.
\newblock {\em Physical Review A}, 58:915--928, 1998.

\bibitem{Szegedy:2004}
M.~Szegedy.
\newblock Quantum speed-up of markov chain based algorithms.
\newblock In {\em Proceedings of the 45th Symposium on Foundations of Computer
  Science}, pages 32--41, 2004.

\bibitem{Ambainis:2004}
A.~Ambainis.
\newblock Quantum walk algorithm for element distinctness.
\newblock In {\em Proceedings of the 45th Annual IEEE Symposium on Foundations
  of Computer Science}, 2004.

\bibitem{Magniez:2011}
F.~Magniez, A.~Nayak, J.~Roland, and M.~Santha.
\newblock Search via quantum walk.
\newblock {\em {SIAM} J. Comput.}, 40(1):142--164, 2011.

\bibitem{Krovi:2010}
H.~Krovi, F.~Magniez, M.~Ozols, and J.~Roland.
\newblock Finding is as easy as detecting for quantum walks.
\newblock In {\em Proceedings of the 37th International Colloquium Conference
  on Automata, Languages and Programming}, pages 540--551, 2010.

\bibitem{mss07}
F.~Magniez, M.~Santha, and M.~Szegedy.
\newblock Quantum algorithms for the triangle problem.
\newblock {\em SIAM Journal on Computing}, 37(2):413--424, 2007.

\bibitem{Mosca:2009}
M.~Mosca.
\newblock Quantum algorithms.
\newblock In Robert~A. Meyers, editor, {\em Encyclopedia of Complexity and
  Systems Science}, pages 7088--7118. Springer, 2009.

\bibitem{Santha:2008}
M.~Santha.
\newblock Quantum walk based search algorithms.
\newblock In {\em Proceedings of the 5th Theory and Applications of Models of
  Computation (TAMC08)}, pages 31--46, 2008.

\bibitem{Patel05}
A.~Patel, K.S. Raghunathan, and P.~Rungta.
\newblock Quantum random walks do not need a coin toss.
\newblock {\em Phys. Rev. A}, 71:032347, 2005.

\bibitem{Patel:2010}
A.~Patel, K.S. Raghunathan, and Md.A. Rahaman.
\newblock Search on a hypercubic lattice using a quantum random walk. ii.
  $d=2$.
\newblock {\em Phys. Rev. A}, 82:032331, Sep 2010.

\bibitem{Falk:2013}
M.~Falk.
\newblock Quantum search on the spatial grid.
\newblock {\em arXiv:1303.4127}, 2013.

\bibitem{Ambainis:2013}
A.~Ambainis, R.~Portugal, and N.~Nahimov.
\newblock Spatial search on grids with minimum memory.
\newblock {\em Quantum Information \& Computation}, 15:1233--1247, 2015.

\bibitem{PBF15}
R.~Portugal, S.~Boettcher, and S.~Falkner.
\newblock One-dimensional coinless quantum walks.
\newblock {\em Phys. Rev. A}, 91:052319, May 2015.

\bibitem{Santos:2015}
R.A.M. Santos, R.~Portugal, and S.~Boettcher.
\newblock Moments of coinless quantum walks on lattices.
\newblock {\em Quantum Information Processing (accepted), arXiv:1502.06293},
  2015.

\bibitem{HKS05}
M.~Hamada, N.~Konno, and E.~Segawa.
\newblock Relation between coined quantum walks and quantum cellular automata.
\newblock {\em RIMS Kokyuroku}, 1422:1--11, Apr 2005.

\bibitem{Strauch:2006}
F.W. Strauch.
\newblock Connecting the discrete- and continuous-time quantum walks.
\newblock {\em Phys. Rev. A}, 74:030301, Sep 2006.

\bibitem{Aharonov:2000}
D.~Aharonov, A.~Ambainis, J.~Kempe, and U.~Vazirani.
\newblock Quantum walks on graphs.
\newblock In {\em Proceedings of the 33rd ACM Symposium on Theory of
  computing}, pages 50--59, 2000.

\bibitem{Shenvi:2003}
N.~Shenvi, J.~Kempe, and K.~B. Whaley.
\newblock Quantum random-walk search algorithm.
\newblock {\em Phys. Rev. A}, 67:052307, 2003.

\bibitem{Grover:1997a}
L.K. Grover.
\newblock Quantum mechanics helps in searching for a needle in a haystack.
\newblock {\em Phys. Rev. Lett.}, 79:325--328, 1997.

\bibitem{Portugal:Book}
Renato Portugal.
\newblock {\em Quantum Walks and Search Algorithms}.
\newblock Springer, New York, 2013.

\bibitem{Whitney:1932}
H.~Whitney.
\newblock Congruent graphs and the connectivity of graphs.
\newblock {\em American Journal of Mathematics}, 54(1):150--168, 1932.

\bibitem{Bei70}
L.W. Beineke.
\newblock {Characterizations of derived graphs}.
\newblock {\em Journal of Combinatorial Theory}, 9(2):129--135, Sep 1970.

\bibitem{Kra43}
J.~Krausz.
\newblock D\'emonstration nouvelle d'une th\'eor\`eme de {W}hitney sur les
  r\'eseaux.
\newblock {\em Mat. Fiz. Lapok}, 50:75--85, 1943.

\bibitem{Har94}
Frank Harary.
\newblock {\em Graph Theory}.
\newblock Addison-Wesley Series in Mathematics. Perseus Books, 1994.

\end{thebibliography}

\newpage

\section*{Appendix}

In this appendix we review the main graph theory concepts~\cite{Har94} we are using in this work and we give some examples.

A simple or undirected graph $\Gamma(V,E)$ is defined by a set $V$ of vertices or nodes and a set $E$ of edges so that each edge links two vertices and two vertices are linked by at most one edge. Two vertices linked by an edge are called adjacent. The complete graph is a simple graph in which every pair of distinct vertices is connected by an edge.  A directed graph is a graph whose edges have a direction associated with them.

A subgraph $\Gamma'(V',E')$, where $V'\subset V$ and $E'\subset E$, is an induced subgraph of $\Gamma(V,E)$ if it has exactly the edges that appear in $\Gamma$ over the same vertex set. If two vertices are adjacent in $\Gamma$ they are also adjacent in the induced subgraph. A clique is a subset of vertices of a simple graph such that its induced subgraph is complete. A maximal clique is a clique that cannot be extended by including one more adjacent vertex. A maximum clique is a clique of maximum possible size.  A clique can have one vertex.

A bipartite graph is a simple graph whose vertex set $V$ is the union of two disjoint sets $X$ and $X'$ so that no two vertices in $X$ are adjacent and no two vertices in $X'$ are adjacent. 

A line graph of a simple graph $\Gamma$ (called root graph) is another graph $L(\Gamma)$ so that each vertex of $L(\Gamma)$ represents an edge of $\Gamma$ and two vertices of $L(\Gamma)$ are adjacent if and only if their corresponding edges share a common vertex in $\Gamma$. For example, consider the bipartite graph $\Gamma$ of Fig.~\ref{fig:app1}. The sets of vertices $X$ and $X'$ have labels $\{\alpha_0,\alpha_1,\alpha_2,\alpha_3\}$ and $\{\beta_0,\beta_1,\beta_2\}$ respectively. The edges of $\Gamma$ have labels from 1 to 6. Each edge becomes a vertex in the line graph $L(\Gamma)$ as can be seen in Fig.~\ref{fig:app1}. Now we have to check which vertices of $L(\Gamma)$ are adjacent. For example, edges 1, 2, and 3 share a common vertex $\alpha_0$ of $\Gamma$. They are adjacent in $L(\Gamma)$ and form a clique with three vertices. This clique has label $\alpha_0$ in $L(\Gamma)$. This goes on until we have the partition described in Fig.~\ref{fig:app1}. 

\

\begin{figure}[h!] 
\centering
\includegraphics[scale=0.42]{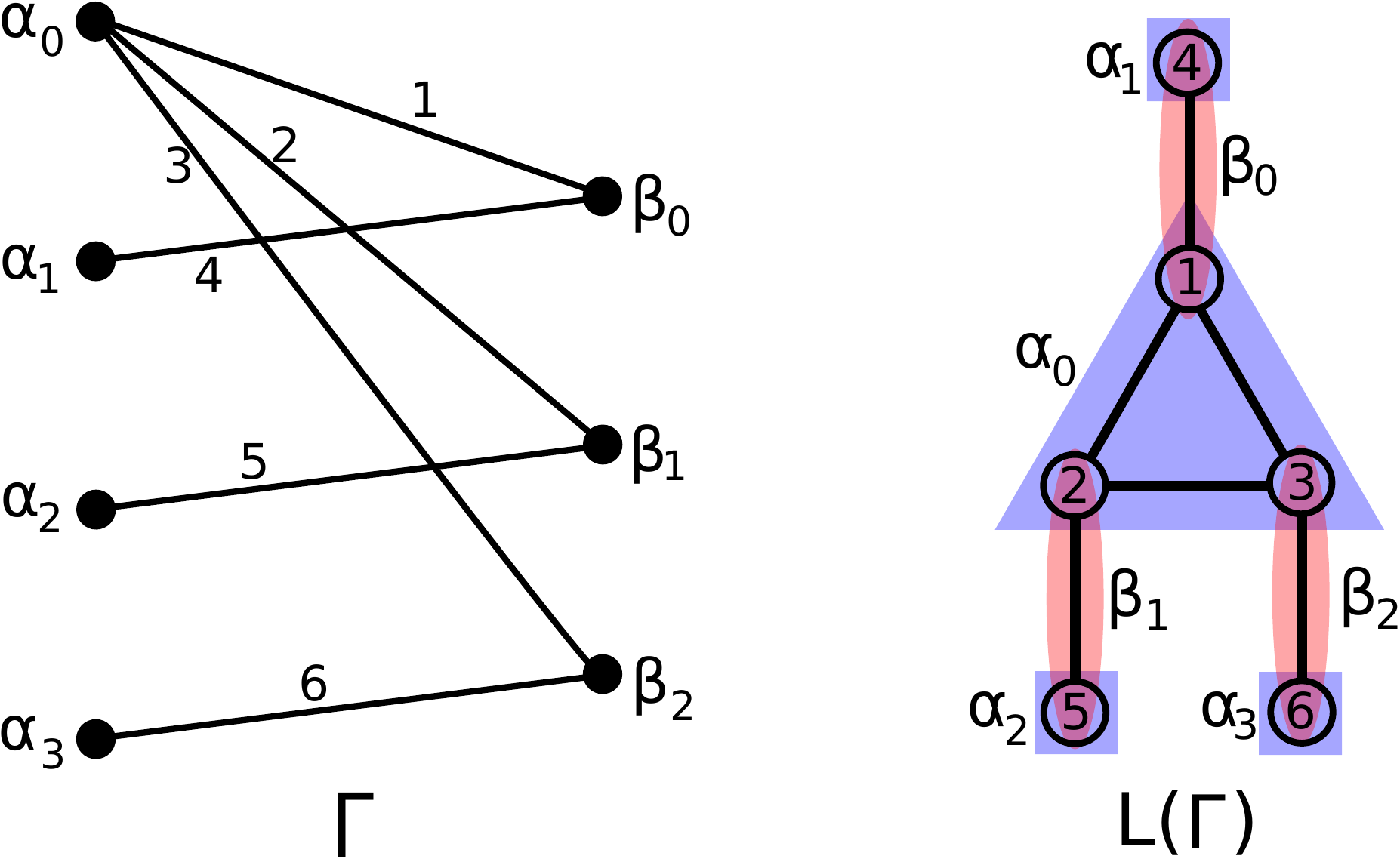}
\caption{A bipartite graph $\Gamma$ and its line graph $L(\Gamma)$ with a two-colorable clique partition labeled by the vertices of $\Gamma$.}
\label{fig:app1}
\end{figure}

Szegedy's model in $\Gamma$ uses the following vectors
\begin{align*}
	&\ket{\phi_0}=\frac{1}{\sqrt 3}\ket{\alpha_0}\left(\ket{\beta_0}+\ket{\beta_1}+\ket{\beta_2} \right),\\
	&\ket{\phi_1}=\ket{\alpha_1}\ket{\beta_0},\,\,	
	\ket{\phi_2}=\ket{\alpha_2}\ket{\beta_1},\,\,	
	\ket{\phi_3}=\ket{\alpha_3}\ket{\beta_2}
\end{align*}
and \begin{align*}
	&\ket{\psi_0}=\frac{1}{\sqrt 2}\left(\ket{\alpha_0}+\ket{\alpha_1} \right)\ket{\beta_0},\\
	&\ket{\psi_1}=\frac{1}{\sqrt 2}\left(\ket{\alpha_0}+\ket{\alpha_2} \right)\ket{\beta_1},\\
	&\ket{\psi_2}=\frac{1}{\sqrt 2}\left(\ket{\alpha_0}+\ket{\alpha_3} \right)\ket{\beta_2}.
\end{align*}
The staggered model in $L(\Gamma)$ uses the following vectors
\begin{align*}
	&\ket{\alpha_0}=\frac{1}{\sqrt 3}\left(\ket{1}+\ket{2}+\ket{3} \right),\\
	&\ket{\alpha_1}=\ket{4},\,\,	\ket{\alpha_2}=\ket{5},\,\,	\ket{\alpha_3}=\ket{6}
\end{align*}
and \begin{align*}
	&\ket{\beta_0}=\frac{1}{\sqrt 2}\left(\ket{1}+\ket{4} \right),\\
	&\ket{\beta_1}=\frac{1}{\sqrt 2}\left(\ket{2}+\ket{5} \right),\\
	&\ket{\beta_2}=\frac{1}{\sqrt 2}\left(\ket{3}+\ket{6} \right).
\end{align*}
Notice that there is a bijection between those vectors if we use the following edge-vertex correspondence: 
$\alpha_0\beta_0 \leftrightarrow 1$, 
$\alpha_0\beta_1 \leftrightarrow 2$, 
$\alpha_0\beta_2 \leftrightarrow 3$, 
$\alpha_1\beta_0 \leftrightarrow 4$, 
$\alpha_2\beta_1 \leftrightarrow 5$, 
$\alpha_3\beta_2 \leftrightarrow 6$.

If it is given two-colorable tessellations of a line graph with one vertex in each polygon intersection, it is possible to find the bipartite root graph. For example, take polygon $\alpha_0$ in $L(\Gamma)$. Selecting the polygons of tessellation $\beta$ that have overlap with $\alpha_0$, we obtain vector
$$\ket{\phi_0}=\frac{1}{\sqrt 3}\ket{\alpha_0}\left(\ket{\beta_0}+\ket{\beta_1}+\ket{\beta_2}\right),$$
and the same with respect to polygons $\alpha_1$, $\alpha_2$, and $\alpha_3$. Now, if we take $\beta_0$, polygons $\alpha_0$ and $\alpha_1$ have overlap with $\beta_0$, then
$$\ket{\psi_0}=\frac{1}{\sqrt 2} \left(\ket{\alpha_0}+\ket{\alpha_1}\right) \ket{\beta_0},$$
and the same with polygons $\beta_1$ and $\beta_2$. Following this method we re-obtain the vectors used by Szegedy's model in the first place and we re-rebuild the bipartite graph.

A Krausz partition splits the edges of the graph into cliques in such a way that no vertex belongs to more than two of the cliques. For example, the union of the red and blue tessellations of graph $L(\Gamma)$ in Fig.~\ref{fig:app1} is a Krausz partition. Notice that each edge is in one member of the partition and each vertex is exactly in two members of the partition. Krausz~\cite{Kra43} proved the following
\begin{theorem}
A graph $\Gamma'$ is a line graph of some graph if and only if $\Gamma'$ has a Krausz partition.
\end{theorem}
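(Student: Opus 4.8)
The plan is to prove the two implications separately, constructing the relevant object explicitly in each direction. For the forward implication, suppose $\Gamma' = L(\Gamma)$ for some simple graph $\Gamma$, so that the vertices of $\Gamma'$ are the edges of $\Gamma$ and two such vertices are adjacent exactly when the corresponding edges of $\Gamma$ share an endpoint. For each vertex $v$ of $\Gamma$, I would let $C_v$ be the set of edges of $\Gamma$ incident to $v$; these edges pairwise share $v$, so $C_v$ is a clique in $\Gamma'$. I would then check that $\{C_v\}$ is a Krausz partition. Every edge $\{e,e'\}$ of $\Gamma'$ arises from two edges $e,e'$ of $\Gamma$ meeting at a \emph{unique} vertex $v$ (uniqueness because a simple graph has no two edges sharing both endpoints), so this edge lies in exactly the clique $C_v$; and each vertex $e=\{u,v\}$ of $\Gamma'$ lies in precisely the two cliques $C_u$ and $C_v$. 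Hence no vertex belongs to more than two cliques. Isolated and degree-one vertices of $\Gamma$ only produce empty or singleton cliques and cause no trouble.

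For the reverse implication, suppose $\Gamma'$ admits a Krausz partition into cliques $C_1,\dots,C_r$. I would build a root graph $\Gamma$ by taking one vertex $c_i$ for each clique $C_i$ and turning each vertex $w$ of $\Gamma'$ into an edge of $\Gamma$: if $w$ lies in two cliques $C_i,C_j$, add the edge $\{c_i,c_j\}$; if $w$ lies in only one clique $C_i$ (or in none), attach a fresh pendant vertex (respectively a fresh isolated edge) to account for it. This yields a bijection between the vertices of $\Gamma'$ and the edges of $\Gamma$, and it remains to verify that this bijection realizes $L(\Gamma)\cong\Gamma'$. If $w,w'$ both lie in some clique $C_i$, they are adjacent in $\Gamma'$ and their corresponding edges both contain the vertex $c_i$; conversely, if $w,w'$ are adjacent in $\Gamma'$, the edge $\{w,w'\}$ belongs to exactly one part $C_i$, so both endpoints lie in $C_i$ and their images meet at $c_i$.

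The main obstacle is the reverse direction, specifically guaranteeing that $\Gamma$ is a genuine simple graph with no spurious adjacencies. The critical observation is that the partition is over \emph{edges}: each edge of $\Gamma'$ lies in exactly one clique, so two distinct vertices $w,w'$ cannot belong to two common cliques (otherwise the edge $\{w,w'\}$ would be covered twice), which forces their images to share at most one endpoint and rules out multiple edges in $\Gamma$. The complementary ``at most two cliques per vertex'' condition is exactly what ensures each vertex of $\Gamma'$ becomes an ordinary edge rather than a hyperedge, so that the incidence structure is that of a line graph. Together these two facts pin down $\Gamma$ (up to isolated vertices) and close the argument, with the degenerate cases of vertices lying in fewer than two cliques handled by the small bookkeeping of adjoining pendant or isolated edges.
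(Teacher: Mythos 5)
Your proof is correct. Note, however, that the paper does not actually prove this statement: it is quoted in the Appendix as a known result of Krausz (cited as reference [Kra43]) and only illustrated by examples (Figs.~\ref{fig:app1} and~\ref{fig:app2}), so there is no proof in the paper to compare against. What you have written is the standard textbook argument, and it is sound in both directions. In the forward direction the essential point you correctly isolate is that $\Gamma$ is \emph{simple}, so two edges of $\Gamma$ meet in at most one vertex; this is what makes each edge of $L(\Gamma)$ fall into exactly one of the star-cliques $C_v$, giving a genuine partition of the edge set, while each vertex $\{u,v\}$ of $L(\Gamma)$ lies in only $C_u$ and $C_v$. In the reverse direction, the two points that typically get glossed over are exactly the ones you address explicitly: (i) two distinct parts of the partition share at most one vertex (else an edge between two shared vertices would lie in two parts, contradicting that the parts partition the edge set), which rules out multiple edges in the constructed root graph $\Gamma$; and (ii) the ``at most two cliques per vertex'' condition is what lets each vertex of $\Gamma'$ be encoded as an edge (a pair of clique-vertices) rather than a hyperedge, with the pendant-edge and isolated-edge bookkeeping handling vertices lying in fewer than two parts. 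The adjacency verification in both directions is complete, so the bijection you construct indeed realizes $\Gamma'\cong L(\Gamma)$.
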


Fig.~\ref{fig:app2} shows a graph $\Gamma'$ that has a Krausz partition. Then $\Gamma'$ is a line graph of some graph $\Gamma$, which is depicted in the right side of the figure.  Notice again that each edge is in one member of the partition and each vertex is exactly in two members of the partition. If the Krausz partition is not two-colorable, then root graph is not bipartite. 
\begin{figure}[h!] 
\centering
\includegraphics[scale=0.38]{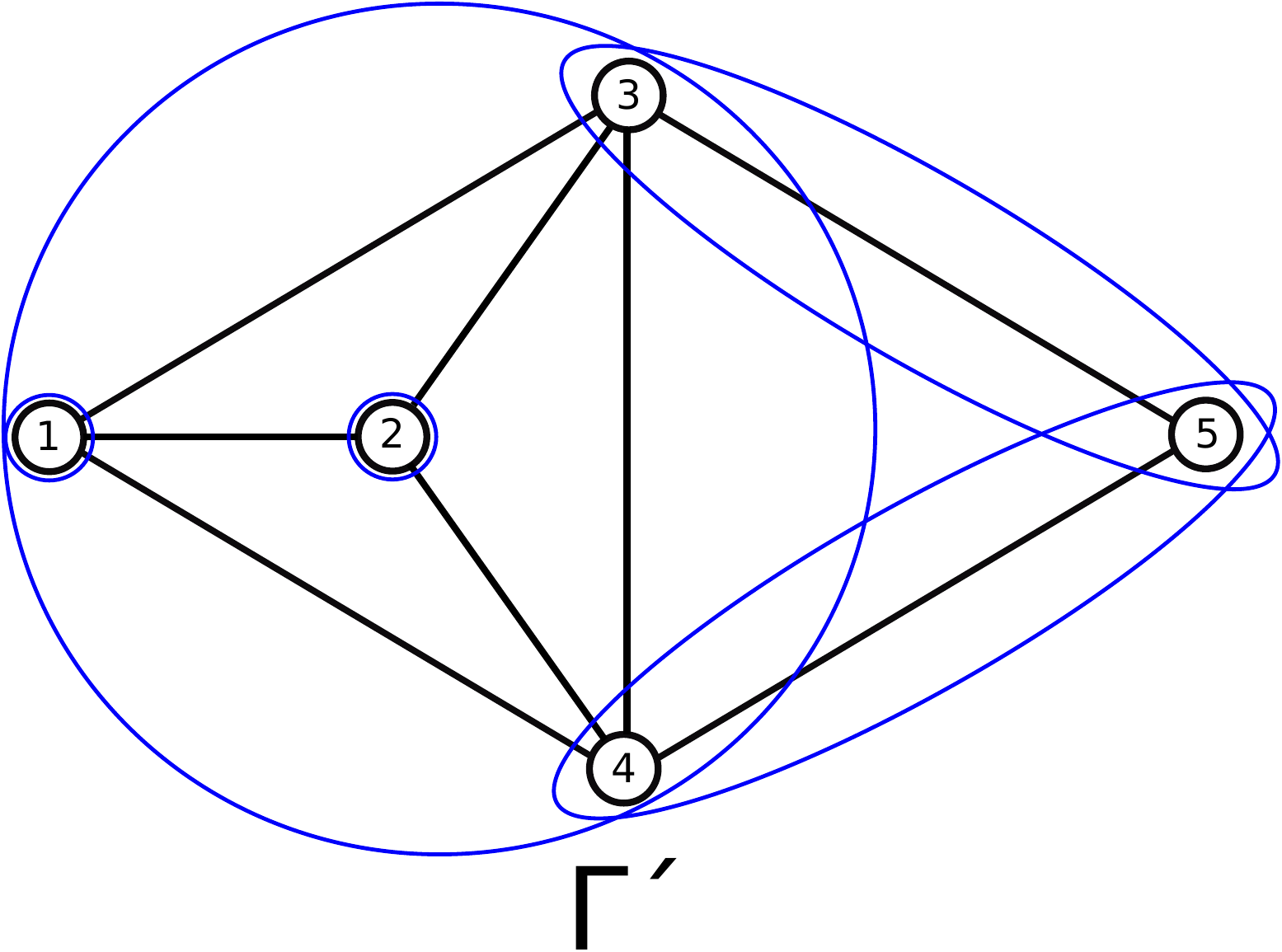}\hspace{1cm}
\includegraphics[scale=0.38]{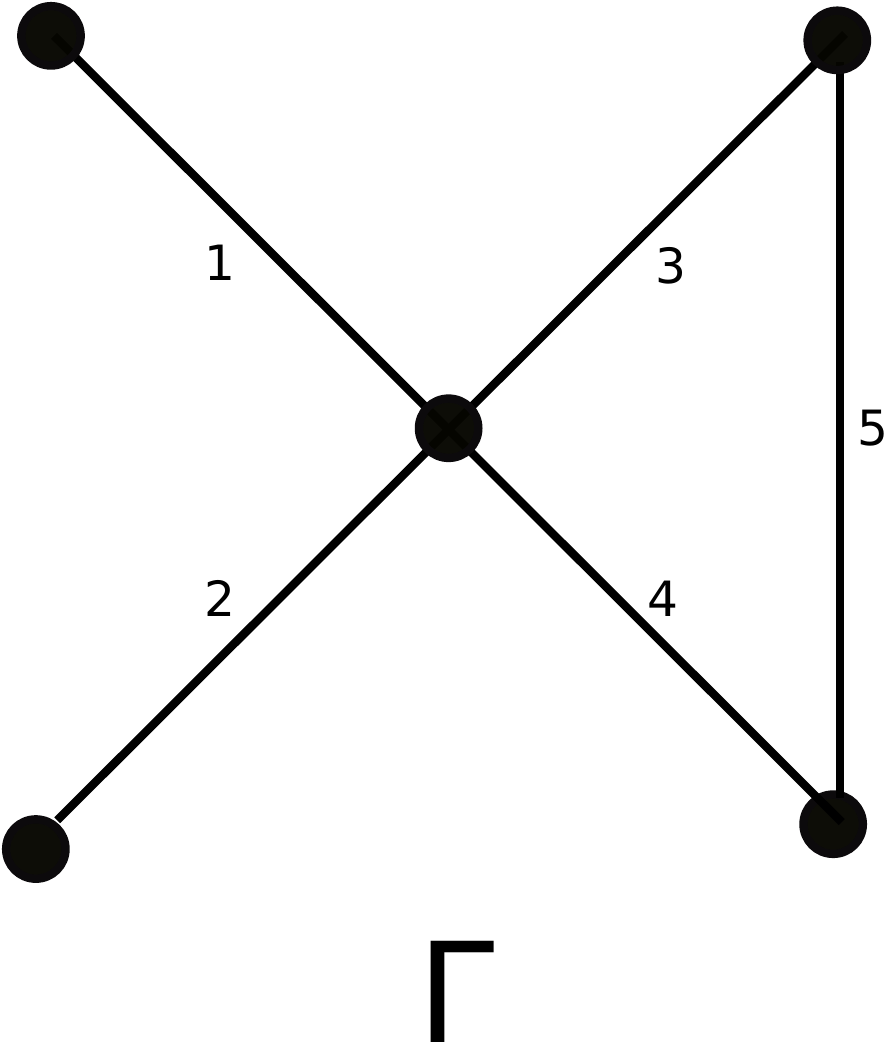}
\caption{A line graph $\Gamma'$ with the Krausz partition and its root graph $\Gamma$. Notice that the partition of  $\Gamma'$ is not two-colorable and $\Gamma$ is not bipartite.}
\label{fig:app2}
\end{figure}

\begin{figure}[b!] 
\centering
\includegraphics[scale=0.38]{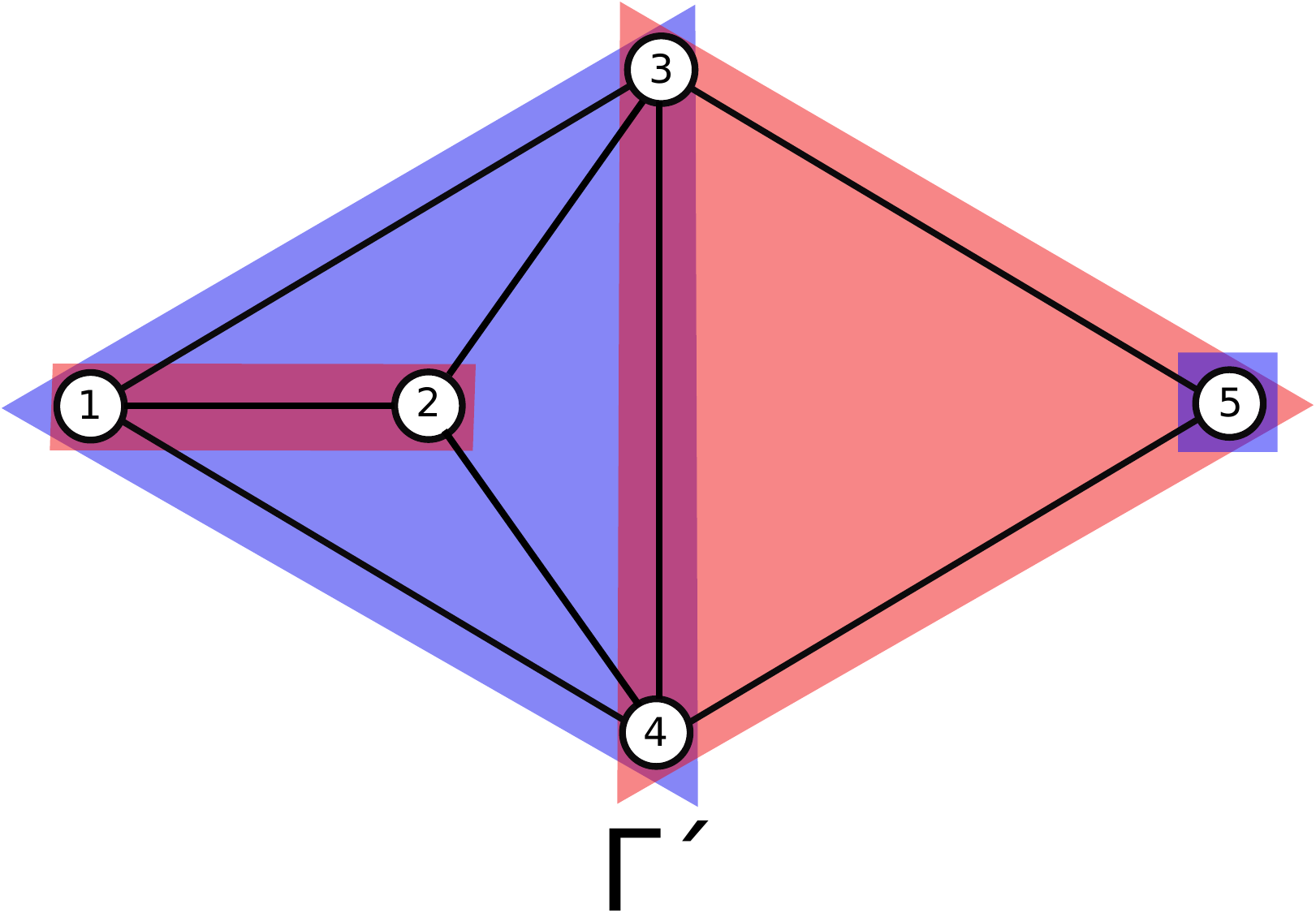} 
\caption{Tessellations of graph $\Gamma'$.}
\label{fig:app3}
\end{figure}
In order to define a staggered QW with two tessellations in graph $\Gamma'$ of Fig.~\ref{fig:app2} we must put two vertices in at least one polygon intersection. Fig.~\ref{fig:app3} shows that $\Gamma'$ is two-tessellable. In the simplest case we can take uniform vectors. The vectors of tessellation $\alpha$ (blue) are  
\begin{align*}
	&\ket{\alpha_0} = \frac{1}{2}\left( \ket{1}+\ket{2}+\ket{3}+\ket{4}\right), \\
	&\ket{\alpha_1} = \ket{5}
\end{align*}
and the vectors of tessellation $\beta$ (red) are
\begin{align*}
	&\ket{\beta_0} = \frac{1}{\sqrt 2}\left(\ket{1}+\ket{2}\right),\\
	&\ket{\beta_1} = \frac{1}{\sqrt 3}\left( \ket{3}+\ket{4}+\ket{5}\right). 
\end{align*}

\begin{figure}[h!] 
\centering
\includegraphics[scale=0.48]{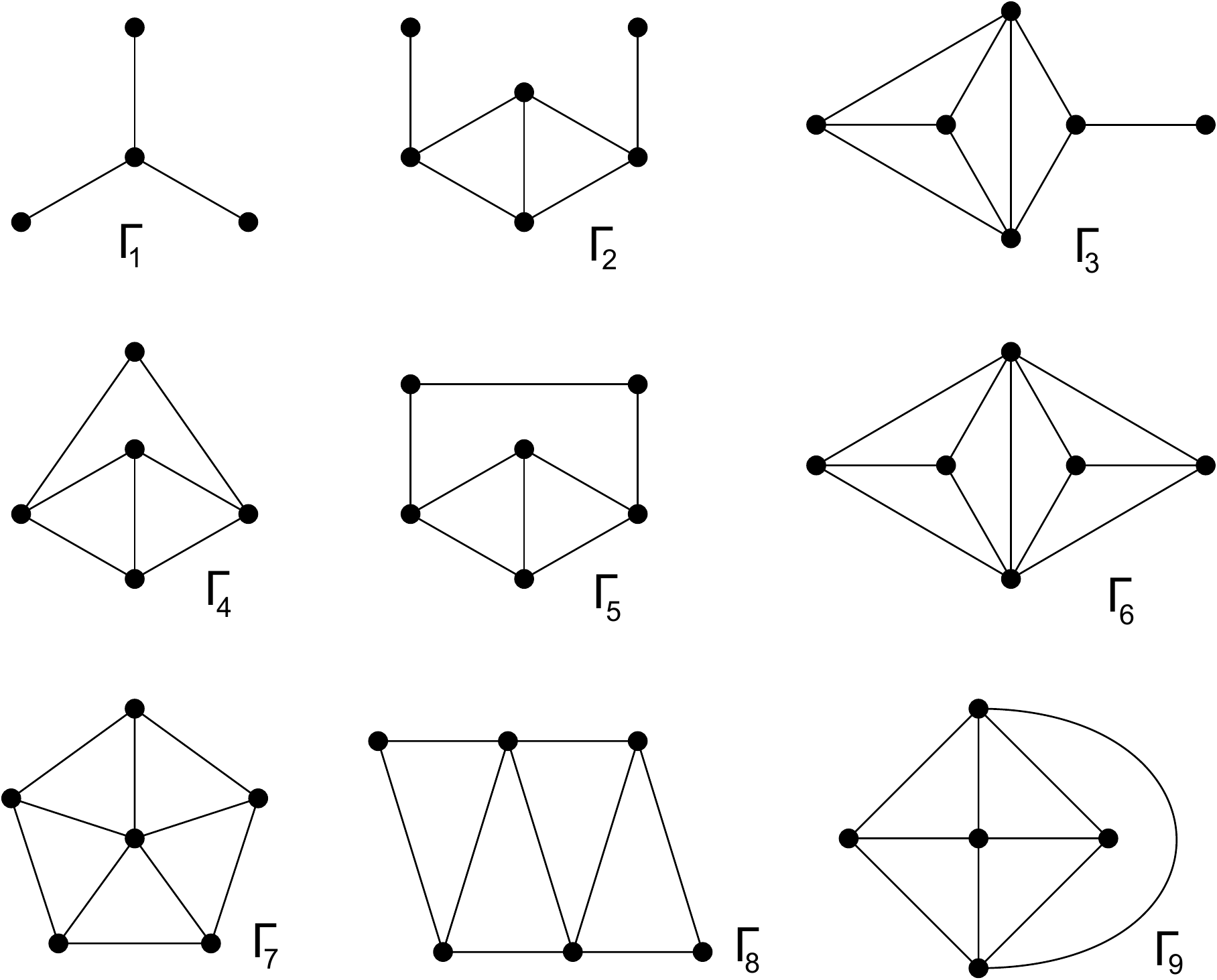}
\caption{Nine forbidden Beineke induced subgraphs.}
\label{fig:app4}
\end{figure}
It is interesting to analyze staggered QWs on graphs that are not line graphs. Those graphs do not have root graphs and any comparison with classical random walks must be performed on the same graph. Krausz partitions play no role in this case. There is an alternative form to characterize line graphs. Beineke~\cite{Bei70} proved the following
\begin{theorem}
Let $\Gamma'$ be a graph. There exists a graph $\Gamma$ such that $\Gamma'$ is the line graph of $\Gamma$ if and only if $\Gamma'$ contains no graph of Fig.~\ref{fig:app4} as an induced subgraph.
\end{theorem}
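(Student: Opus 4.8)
The plan is to deduce Beineke's characterization from the Krausz partition theorem stated just above, reducing everything to the existence or non-existence of a partition of the edges into cliques in which each vertex lies in at most two cliques.

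For the forward implication I would first record the elementary closure property: if $\Gamma'=L(\Gamma)$ and $S$ is a set of vertices of $\Gamma'$, then $S$ corresponds to a set of edges of $\Gamma$, and the subgraph of $\Gamma'$ induced on $S$ is exactly the line graph of the subgraph of $\Gamma$ spanned by those edges. Hence the class of line graphs is closed under induced subgraphs, and to show that a line graph contains none of the nine graphs of Fig.~\ref{fig:app4} it suffices to verify that none of the nine is itself a line graph. By Krausz's theorem this amounts to showing, for each of them, that no partition of its edges into cliques with every vertex in at most two cliques exists. For $K_{1,3}$ this is immediate, since its centre would have to belong to three distinct cliques; the remaining eight are dispatched by the same finite, case-by-case inspection.

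The substance lies in the reverse implication. Assume $\Gamma'$ contains none of the nine graphs as an induced subgraph; I want to build a Krausz partition and then invoke Krausz's theorem. Forbidding the claw $K_{1,3}$ bounds the independence number of every neighbourhood by two, but --- as the wheel $K_1+C_5$ shows --- this alone does not make a neighbourhood a union of two cliques. The role of the remaining eight forbidden graphs is precisely to upgrade this local picture: away from a short list of small dense exceptions (essentially $K_3$ and a few of its relatives, checked by hand), the absence of those subgraphs should force the edges incident to each vertex to split into at most two cliques, and should force this two-clique decomposition to be essentially unique.

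The hard part, where the eight non-claw graphs are actually consumed, is to glue these per-vertex decompositions into a single partition of the edge set of $\Gamma'$ into cliques with each vertex in at most two of them. For an edge $uv$ the decompositions chosen at $u$ and at $v$ must agree on the clique containing $uv$; a conflict in this agreement, a vertex being driven into a third clique, or two distinct cliques sharing an edge each forces one of the remaining configurations (the $K_5$- and octahedron-type graphs, the books, and the odd wheels) as an induced subgraph. I would therefore organise the argument as a finite list of obstruction lemmas, each embedding one named forbidden graph, so that the hypothesis rules out every obstruction and the Krausz partition exists. Proving that this list of obstructions is \emph{complete} --- that no further way for the gluing to fail can occur --- is the genuine difficulty, and it is exactly this completeness that pins the number of forbidden graphs at nine.
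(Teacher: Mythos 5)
The paper itself offers no proof of this statement---it is quoted directly from Beineke's 1970 paper as a known characterization---so your proposal must be judged on its own, and it does not stand as a proof. Your forward direction is fine: the observation that line graphs are closed under taking induced subgraphs (the subgraph of $L(\Gamma)$ induced on a vertex set $S$ is the line graph of the subgraph of $\Gamma$ spanned by the corresponding edges) is correct, and reducing the problem to the finite check that none of the nine graphs of Fig.~\ref{fig:app4} admits a Krausz partition is legitimate; your treatment of $K_{1,3}$, whose centre would need to lie in three distinct cliques of any such partition, is the right template for the other eight.

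The genuine gap is the reverse direction, and you concede it yourself in your last sentence. Everything rests on the claim that every way the local two-clique decompositions can fail to exist, fail to be consistent across an edge, or fail to glue into a global Krausz partition forces one of the nine graphs as an induced subgraph. That completeness claim is not a bookkeeping step to be organised into ``obstruction lemmas'' whose statements are left unwritten---it \emph{is} Beineke's theorem. As written, your argument establishes (a) that the nine graphs are not line graphs, and (b) that \emph{if} every gluing failure embeds one of the nine graphs, the theorem follows; but the hypothesis of (b) is precisely what has to be proven, and nothing in the proposal indicates how. A real proof must either carry out the structural analysis of neighbourhoods in claw-free graphs (classifying the neighbourhoods that are not unions of two cliques and exhibiting, in each case, a forbidden induced subgraph using vertices beyond the neighbourhood) or proceed by induction on the vertex count with an explicit and exhaustive case analysis. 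Until one of these is supplied, with the eight non-claw graphs actually located inside concrete configurations, what you have is a plausible plan for a proof, not a proof.
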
 
Graph $\Gamma_1$ of Fig.~\ref{fig:app4} is called claw. A trivial consequence of Beineke's theorem is the following
\begin{corollary}
A line graph is claw-free.
\end{corollary}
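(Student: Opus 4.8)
The plan is to observe first that this corollary follows immediately from Beineke's theorem stated just above. The claw $K_{1,3}$ is precisely the graph $\Gamma_1$ listed among the nine forbidden induced subgraphs of Fig.~\ref{fig:app4}. Hence, if $\Gamma'$ is a line graph, Beineke's theorem forbids $\Gamma'$ from containing any of those nine subgraphs as an induced subgraph, and in particular it cannot contain $\Gamma_1$. Therefore a line graph is claw-free. This one-line argument is the intended route, since the corollary is advertised as a trivial consequence of the preceding theorem.

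Because that argument merely quotes the heavy machinery of Beineke's characterization, I would also supply a short self-contained proof using only the definition of a line graph. Suppose, for contradiction, that $\Gamma'=L(\Gamma)$ contains an induced claw, i.e.\ four vertices $w_0,w_1,w_2,w_3$ of $\Gamma'$ such that $w_0$ is adjacent to each of $w_1,w_2,w_3$ while $w_1,w_2,w_3$ are pairwise non-adjacent. Under the identification of vertices of $L(\Gamma)$ with edges of $\Gamma$, let $e_0,e_1,e_2,e_3$ be the corresponding edges of $\Gamma$, with $e_0=\{u,v\}$ the center. Adjacency in $L(\Gamma)$ means sharing a common endpoint in $\Gamma$, so each leaf edge $e_i$ (for $i=1,2,3$) meets $e_0$; since $\Gamma$ is simple, two distinct edges share at most one vertex, and thus each $e_i$ contains exactly one of $u$ or $v$.

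The decisive step is then a pigeonhole count: the center edge has only two endpoints $u$ and $v$, yet the three leaf edges each attach to one of them, so at least two of $e_1,e_2,e_3$ share a common endpoint. Those two leaf edges are therefore adjacent in $L(\Gamma)$, contradicting the requirement that $w_1,w_2,w_3$ be pairwise non-adjacent in an induced claw. This contradiction shows that no induced claw can occur, so $L(\Gamma)$ is claw-free. I do not expect any genuine obstacle here; the only point requiring care is confirming that in a simple graph each leaf edge meets the center edge in exactly one endpoint, which rules out degenerate configurations and makes the pigeonhole count clean.
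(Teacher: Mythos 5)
Your first paragraph is precisely the paper's own argument: the claw is graph $\Gamma_1$ of Fig.~\ref{fig:app4}, so Beineke's theorem immediately forbids it as an induced subgraph of any line graph, and the paper offers nothing beyond this observation (it explicitly labels the corollary a ``trivial consequence''). Your second, self-contained argument is a genuinely different and more elementary route: it replaces the full strength of Beineke's characterization by the bare definition of the line graph together with a pigeonhole count on the two endpoints of the center edge. That argument is correct: each leaf edge shares an endpoint with the center edge $e_0=\{u,v\}$, so two of the three leaves must attach to the same endpoint, making them adjacent in $L(\Gamma)$ and contradicting the inducedness of the claw. (One small simplification: you do not even need the ``exactly one endpoint'' refinement from simplicity; ``at least one of $u,v$'' already suffices for the pigeonhole step.) What the direct proof buys is logical independence from the classification theorem --- the corollary holds with a two-line argument even for a reader who has never seen the list of nine forbidden subgraphs. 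What the paper's route buys is brevity and coherence with the surrounding discussion, which is organized entirely around the Krausz and Beineke characterizations of line graphs.
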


By inspection, we verify that graphs $\Gamma_2$, $\Gamma_3$, $\Gamma_4$, and $\Gamma_6$ can be tessellated with two tessellations if we put two vertices in common in a polygon intersection while $\Gamma_1$, $\Gamma_5$, $\Gamma_7$, $\Gamma_8$, and $\Gamma_9$ cannot be tessellated with two tessellations. $\Gamma_5$ cannot be tessellated with two tessellations because of parity violation and the remaining graphs have three or more maximal cliques with a common vertex. To define a staggered QW on a graph that is not a line graph of a bipartite graph, either we have to use a tessellation that has at least two vertices in common in a polygon intersection or we have to use more than two tessellations.

\begin{figure}[t!] 
\centering
\includegraphics[scale=0.43]{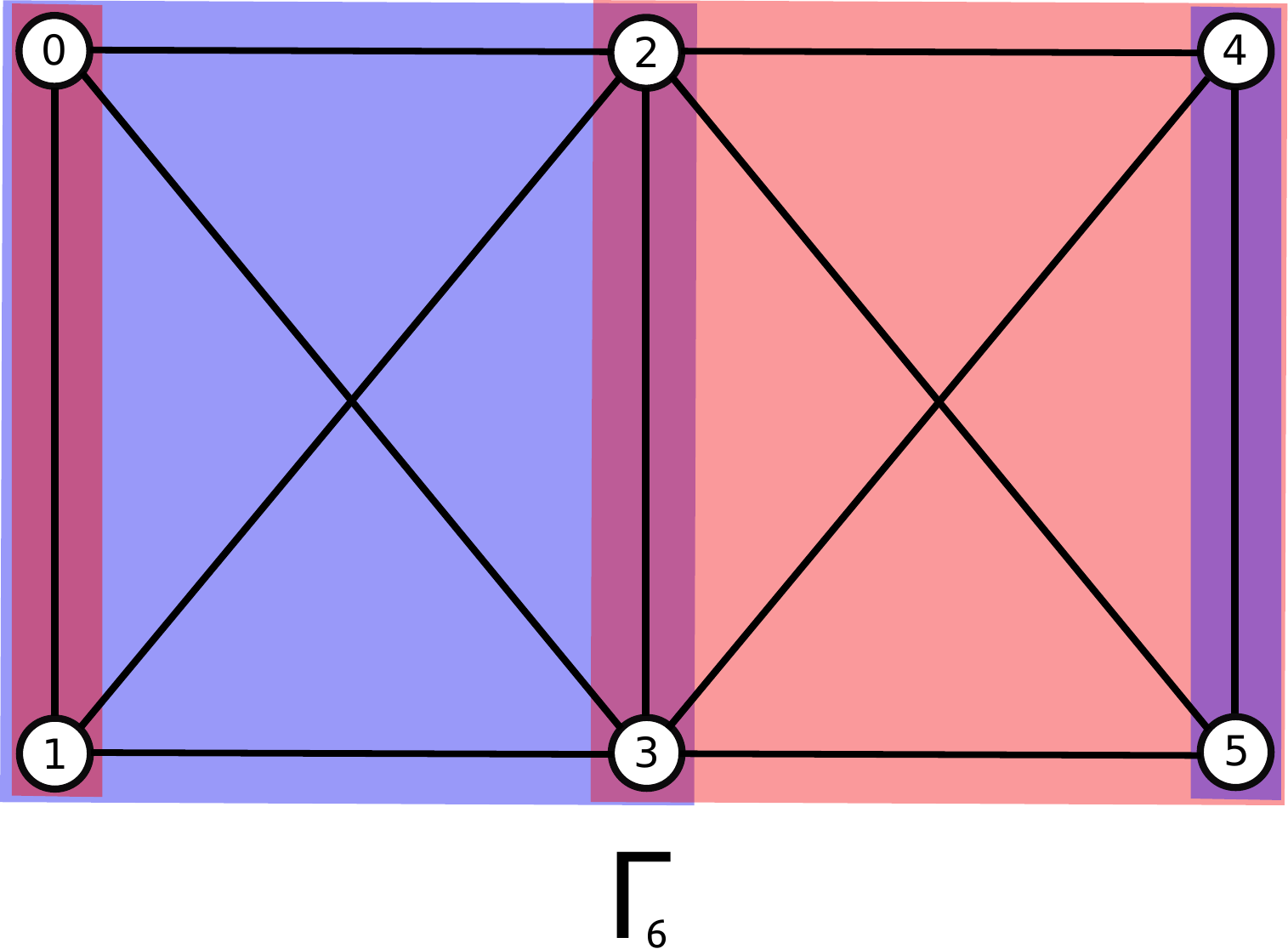}
\caption{A two-tessellable graph that is not a line graph. The tessellations have two vertices in common in at least one polygon intersection.}
\label{fig:app5}
\end{figure}
As a last example, consider the two-tessellable graph of Fig.~\ref{fig:app5}, which is not a line graph as one can check by finding one of the Beineke induced subgraphs. The vectors of tessellation $\alpha$ (blue) are  
\begin{align*}
	&\ket{\alpha_0} = \frac{1}{2}\left( \ket{0}+\ket{1}+\ket{2}+\ket{3}\right), \\
	&\ket{\alpha_1} = \frac{1}{\sqrt 2}\left(\ket{4}+\ket{5}\right)
\end{align*}
and the vectors of tessellation $\beta$ (red) are
\begin{align*}
	&\ket{\beta_0} = \frac{1}{\sqrt 2}\left(\ket{0}+\ket{1}\right),\\
	&\ket{\beta_1} = \frac{1}{2}\left( \ket{2}+\ket{3}+\ket{4}+\ket{5}\right). 
\end{align*}
This choice generates the unitary operator
$$U=\frac{1}{2}\left[ \begin {array}{cccccc} 1&-1&1&1&0&0\\ \noalign{\medskip}-1&1&1
&1&0&0\\ \noalign{\medskip}0&0&1&-1&1&1\\ \noalign{\medskip}0&0&-1&1&1
&1\\ \noalign{\medskip}1&1&0&0&1&-1\\ \noalign{\medskip}1&1&0&0&-1&1
\end {array} \right]$$
which is obtained from Eqs.~(\ref{U}) to~(\ref{beta_k}). $U$ satisfies $U^6=I$. Therefore the evolution is periodic. To obtain a more general evolution operator we can choose $\ket{\beta_0}$ as a non-trivial point in the Bloch sphere by taking
$$\ket{\beta_0} = \cos\frac{\theta}{2}\,\ket{0}+e^{-i\varphi}\sin\frac{\theta}{2}\,\ket{1},$$
where $0\le \theta\le \pi$ and $0\le \varphi < 2\pi$. 

Two-tessellable staggered QWs are interesting because we can use Szegedy's spectral theorem to find the spectrum of the evolution operator. All line graphs of bipartite graphs are two-tessellable and generate staggered QWs equivalent to Szegedy's QWs. Not all line graphs of non-bipartite graphs are two-tessellable and the same is true for graphs that are not line graphs of any graph. It would be interesting to characterize the class of two-tessellable graphs.

\end{document}